\newcommand{\R}{\mathbb{R}}
\newcommand{\C}{\mathbb{C}}
\newcommand{\Z}{\mathbb{Z}}
\newcommand{\N}{\mathbb{N}}
\newcommand{\simiid}{\overset{\textrm{i.i.d.}}{\sim}}
\newcommand{\Norm}[1]{\mathcal{N}\left(#1\right)}
\newcommand{\CNorm}[1]{\mathcal{CN}\left(#1\right)}
\newcommand{\Ex}[1]{\mathbb{E}\left[#1\right]}
\DeclareMathOperator{\trace}{trace}
\DeclareMathOperator{\ReOp}{Re}
\renewcommand{\Re}[1]{\ReOp\left[#1\right]}
\newcommand{\tRe}[1]{\ReOp[#1]}
\newcommand{\aug}[1]{\underline{#1}}
\newenvironment{proofsketch}{\paragraph{\textbf{Proof sketch}}}{\hfill$\blacksquare$}
\DeclarePairedDelimiterX{\norm}[1]{\lVert}{\rVert}{#1}
\DeclarePairedDelimiterX{\normi}[1]{\lVert}{\rVert_{\infty}}{#1}
\title{Frequency Domain Gaussian Process Models for $H^\infty$ Uncertainties}
\author{%
 \Name{Alex Devonport} \Email{alex\_devonport@berkeley.edu}\\
 \addr Department of Electrical Engineering and Computer Sciences \\
       University of California, Berkeley
 \AND
 \Name{Peter Seiler} \Email{pseiler@umich.edu}\\
 \addr Department of Electrical Engineering and Computer Science \\
       University of Michigan, Ann Arbor
 \AND
 \Name{Murat Arcak} \Email{arcak@berkeley.edu}\\
 \addr Department of Electrical Engineering and Computer Sciences \\
       University of California, Berkeley
}
\begin{document}

\maketitle

\begin{abstract}%
    Complex-valued Gaussian processes are used in Bayesian frequency-domain
    system identification as prior models for regression. If each realization
    of such a process were an $H_\infty$ function with probability one, then the
    same model could be used for probabilistic robust control, allowing for
    robustly safe learning.
    We investigate sufficient conditions for a general complex-domain Gaussian
    process to have this property. 
    For the special case of processes whose
    Hermitian covariance is stationary, we provide an explicit parameterization
    of the covariance structure in terms of a summable sequence of nonnegative
    numbers.
\end{abstract}

\begin{keywords}%
  Gaussian processes; system identification.
\end{keywords}

\section{Introduction}

With the general popularity of Gaussian process models in machine
learning and in particular their growing adoption in data-driven
control, there have been recent advances in using Gaussian process models as
nonparametric Bayesian estimators in system identification. Initially this was
done in the time domain, with works 
like~\cite{pillonetto2010new} 
and~\cite{chen2012estimation}
using Gaussian processes to identify the impulse
response of an LTI stable system. Subsequent works consider frequency-domain
regression,
such as~\cite{lataire2016transfer} 
which uses a modified complex Gaussian process regression
model to estimate transfer functions from discrete Fourier transform (DFT) data, 
and~\cite{stoddard2019gaussian}
which considers a
similar regression approach to estimate the generalized frequency response of
nonlinear systems.
These methods also have close ties to some non-probabilistic estimation methods,
such as analytic interpolation (\cite{singh2020loewner,takyar2010analytic}) and kernel-based interpolation
(\cite{khosravi2021kernel,khosravi2019kernel}). At the heart of these Bayesian
techniques is the \emph{prior model}, a probabilistic dynamical model of an uncertain
system that represents one's knowledge of the system
prior to collecting any data.

Probabilistic dynamical models for uncertain systems are also used extensively in probabilistic robust
control, such as probabilistic $\mu$
analysis~(\cite{khatri1998guaranteed,balas2012analysis,biannic2021advanced}),
disk margins~(\cite{somers2022probabilistic}), and the methods reviewed
in~\cite{calafiore2007probabilistic}.
In probabilistic robust control, each
possible realization of the probabilistic uncertainty must be interpretable as a
system of the type being modeled; otherwise, robustness guarantees involving
ensembles of uncertainties would not be meaningful. This is a strong
interpretability requirement compared to Bayesian system identification, where
typically only the regression mean needs to be interpretable.

Since both Bayesian system identification and probabilistic robust control use
probabilistic uncertainty models, applying both techniques to the same model is
a promising strategy for safely learning an unknown or uncertain control system.
However, this is only possible if the nonparametric uncertainty model used for
system identification satisfies the stronger interpretability requirement of
probabilistic robust control. Prior work in Bayesian system identification 
like~\cite{pillonetto2010new} and~\cite{lataire2016transfer} can guarantee that
the regression mean is causal and stable, but guarantees on the mean do not
generally carry over to guarantees about process realizations.
For example, while the predictive mean of a Gaussian
process regression model is guaranteed to inhabit the reproducing kernel Hilbert space (RKHS)
corresponding to the prior covariance, it is generally the case that the
realizations of the Gaussian process model itself lie outside of the RKHS of its
covariance with probability one.

The contribution of this paper is to provide conditions under which a Gaussian
process model is appropriate for both Bayesian system identification and
probabilistic control. Specifically, 
we provide conditions under which realizations of a complex Gaussian process of
a complex variable correspond to the z-transform of an LTI, causal, BIBO
stable, and real system with probability one. Since an LTI, causal, and BIBO stable system is
characterized by a z-transform that resides in the Hardy space $H_\infty$, we
refer to such processes as $H_\infty$ Gaussian processes.
We give the conditions
(Theorem~\ref{prop:hinf-conditions} and Proposition~\ref{prop:real-impulse}) 
directly in terms of the frequency-domain covariance functions: this
allows one to design frequency-domain covariance functions directly, as opposed
to the approach used by prior works in Bayesian system identification, where
frequency-domain covariances must be derived from the z-transform (or Laplace
transform) of a time-domain stochastic impulse response. In cases where prior
knowledge is given in frequency-domain terms, being able to construct the
frequency-domain covariance is more practical.

In addition to the general conditions, we provide a complete characterization
(Theorem~\ref{prop:stationary-process}) of
the covariance structure of a special class of $H_\infty$ Gaussian process,
namely those whose Hermitian covariance is stationary. Each Hermitian stationary
$H_\infty$ process is parameterized by a summable sequence of
nonnegative reals, which lead to computationally tractable closed forms for
certain choices of sequences. Since stationary processes are a popular choice for
GP regression priors, this characterization makes it possible to construct useful and
computationally convenient priors for Bayesian system identification that are
also fully interpretable as probabilistic dynamical models.

To verify the utility of $H_\infty$ GP models for Bayesian transfer function
estimation, we apply the technique to two second-order systems using a mixture
of a Hermitian stationary $H_\infty$ processes constructed with
Theorem~\ref{prop:stationary-process} and an $H_\infty$ process designed to
model resonance peaks. Contrary to other recent work in Bayesian system
identification, we choose to use the \emph{strictly linear estimator} for our
Gaussian process models instead of the \emph{widely linear estimator}.
Although the widely linear estimate is superior for general processes, we
find that for $H_\infty$ Gaussian process models the strictly linear estimator
works nearly as well while being simpler and more stable to compute than the
widely linear estimator.
\begin{extendedonly}
\end{extendedonly}

The rest of the paper is organized as follows.
Section~\ref{sec:preliminaries} introduces the system setup, reviews
background information on complex-valued random variables and stochastic
processes, and introduces the classes of complex Gaussian processes that we
study in this paper. Section~\ref{sec:hinf-processes} provides the
conditions and characterizations described in the last paragraph,
and represents the main technical contribution of this work.
Section~\ref{sec:regression} 
reviews widely linear and strictly linear complex estimators for complex
Gaussian process regression and presents numerical examples of Bayesian system
identification.
\begin{shortonly}
We omit the full proofs of
of Theorem~\ref{prop:hinf-conditions},
Proposition~\ref{prop:real-impulse}, and Theorem~\ref{prop:stationary-process} in this paper
in favor of ``proof sketches'' for
brevity. The full proofs are available in an extended paper~(EXTENDED PAPER)
available online.
\end{shortonly}

\section{Notation}

For a complex vector or matrix $X$, $X^*$ denotes the complex conjugate and
$X^H$ denotes the conjugate transpose.
We denote the exterior of the unit disk as 
$E = \{Z\in\C : |z| > 1\}$, and its closure as
$\bar{E} = \{Z\in\C : |z| \ge 1\}$.
$L_2$ is the Hilbert space of functions $f:\C\to\C$ such that
$\int_{-\pi}^{\pi} |f(Re^{j\Omega})|^2d\Omega < \infty$, 
equipped with the
inner product $\langle f,g\rangle_2 = \int_{-\pi}^{\pi}
f(e^{j\Omega}){g^*}(e^{j\Omega})d\Omega$.
$H_2$ is the Hilbert space of functions $f:\C\to\C$ that are bounded
and analytic for all $z\in E$ such satisfy and 
$\int_{-\pi}^{\pi} |f(Re^{j\Omega})|^2d\Omega < \infty$,
equipped with the
inner product $\langle f,g\rangle_2 = \int_{-\pi}^{\pi}
f(e^{j\Omega}){g^*}(e^{j\Omega})d\Omega$. It is a vector subspace of
$L_2$.
$H_\infty$ is the Banach space of functions $f:\bar{E}\to\C$ that are bounded
and analytic for all $z\in E$ and
$\sup_{\Omega\in[-\pi,\pi]} |f(e^{j\Omega})| < \infty$, equipped with the
norm $\normi{f} = \sup_{\Omega\in[-\pi,\pi]} |f(e^{j\Omega})|$.
$\ell^1$ is the space of absolutely summable sequences, 
that is sequences $\{a_n\}_{n=0}^\infty$ 
such that $\sum_{n=0}^\infty |a_n| <\infty$.

\section{Preliminaries}
\label{sec:preliminaries}

The object of this paper is to construct nonparametric statistical models for
causal, LTI, BIBO stable systems in the frequency domain. Since our main focus
will be the probabilistic aspects of the model, we restrict our attention to the
simplest dynamical case: a single-input single-output system in discrete time.
Thus, our dynamical systems are frequency-domain multiplier operators 
$H_f:L_2\to L_2$ whose output is defined pointwise as $(H_f
u)(\omega)=f(\omega)u(\omega)$, 
where $f:\C\to\C$ is the system's transfer function. Thanks to the bijection 
$H_f \leftrightarrow f$, we generally mean the function $f$ when we refer
to ``the system''.

Since our aim is to construct a probabilistic model for the system that is not
restricted to a finite number of parameters, we must work directly with random
complex functions of a complex variable: this is a special type of complex
stochastic process which we call a z-domain process.

\begin{definition}
    \label{def:z_domain_process}
    Let $(\Xi, F, \mathbb{P})$ denote a probability space. 
    A z-domain stochastic process with domain $D\subseteq \C$ is a
    function $f:\Xi\times D\to\C$.
\end{definition}

Note that each value of $\xi\in\Xi$ yields a function
$f_\xi=f(\xi,\cdot):D\to\C$, which is called either a ``realization'' or a
``sample path'' of $f$.
If we take $\xi$ to be selected at random according to the probability law
$\mathbb{P}$, then $f_\xi$ represents a ``random function'' in the frequentist
sense.
Alternatively, if we have a prior belief
about the likelihood of some $f_\xi$ over others, 
we may encode this belief in a Bayesian sense using the measure $\mathbb{P}$.
We drop the dependence of $f$ on $\xi$ from the notation outside of definitions,
as it will be clear when $f(z)$ refers to the random variable $f(\cdot,z)$ or
when $f$ stands for a realization $f_\xi$.

\begin{shortonly}
\begin{definition}
    \label{def:gaussian_process}
    A Gaussian z-domain process is a z-domain process $f$ such that, for any
    $n>0$, the random vector
    $(f(z_1),\dotsc,f(z_n))$ is complex multivariate Gaussian 
    for all $(z_1,\dotsc,z_n)\in D^n$.
\end{definition}
A complex Gaussian process is more than two real-valued Gaussian processes added
together, as the real and imaginary parts may depend on each other. Unlike a
real Gaussian process, which is characterized by its mean
$m(t)=\Ex{x(t)}$ and covariance $k(t,s)=\Ex{x(t)x(s)}$, a complex Gaussian
process $f$ is characterized by three functions: its mean $m(z)=\Ex{f(z)}$,
its \emph{Hermitian covariance} $k(z,w)=\Ex{f(z)f^*(w)}$ and its
\emph{complementary covariance} $\tilde{k}(z,w)=\Ex{f(z)f(w)}$.
\end{shortonly}

\begin{extendedonly}

\subsection{Complex Random Variables and Stochastic Processes}%
\label{sub:complex_random_variables_and_stochastic_processes}

Complex random variables and processes are essentially no different from real
ones, but certain statistical descriptions for real random variables do not
extend to the complex case unless suitably augmented.
The following is an example of what can go wrong.
\begin{example}
    A real Gaussian random variable is completely
    determined by its mean and variance. However, this is not true for complex
    Gaussian random variables. Consider the random variables
    $Z = X+jY$, and $W=j\sqrt{2}X$, where $X,Y\simiid\Norm{0,1}$. These are
    distinct random variables, as evidenced by the fact that they have different
    supports; however, their means are $\Ex{Z}=\Ex{W}=0$, and their variances
    are $\Ex{ZZ^*}=\Ex{WW^*}=2$.
\end{example}
We have defined the variance of a mean-zero complex random variable
$Z$ to be $\Ex{ZZ^*}$: this is required in order for the variance to
be a real nonnegative number, and specializes to the standard variance in the
purely real case.

What statistic, not required in the real case, distinguishes $Z$ and $W$? It
turns out to be the ``real'' variance $\Ex{Z^2}$; in the example, we have
$\Ex{Z^2}=0$ and $\Ex{W^2}=-2$. In general, a complex Gaussian $Z$ is completely
specified by $\Ex{Z}$, $\Ex{ZZ^*}$, and $\Ex{Z^2}$. We call 
the statistic 
$\sigma^2=\Ex{ZZ^*}$ the \emph{Hermitian variance}, and
$\tilde{\sigma}^2=\Ex{Z^2}$ the \emph{complementary variance}.%
\footnote{Other names for the complementary (co)variance in the literature of
complex random variables and stochastic processes are the
\emph{pseudo(co)variance} and \emph{relation}.}
We carry this
nomenclature to stochastic processes, assigning to a z-domain process the
Hermitian covariance function $k(z,w)=\Ex{f(z)f^*(w)}$ and complementary
covariance function $\tilde{k}(z,w)=\Ex{f(z)f(w)}$.

\begin{definition}
    \label{def:gaussian_process}
    A Gaussian z-domain process is a z-domain process $f$ such that, for any
    $n$, the random vector
    $(f(z_1),\dotsc,f(z_n))$ is complex multivariate Gaussian-distributed
    for all $(z_1,\dotsc,z_n)\in D^n$.
\end{definition}
Analogous to the way that a real Gaussian process is determined by its mean and
covariance, a Gaussian z-domain process is completely specified by its mean
$m:D\to\C$,
Hermitian covariance $k:D\times D\to\C$, 
and complementary covariance $\tilde{k}:D\times D\to \C$. 

A complex random variable $f=x+jy$ may also be represented in its \emph{augmented form}
$\aug{f}=(f,f^*)$. This form is useful despite being redundant, as it yields a convenient expression of the
second-order statistics in terms of an \emph{augmented covariance matrix}
\begin{equation}
    \aug{K}_f 
    = \Ex{
    \begin{bmatrix}
        f \\ f^*
    \end{bmatrix}
    \begin{bmatrix}
        f^* & f
    \end{bmatrix}
    }
    =
    \begin{bmatrix}
        K_f & \tilde{K}_f \\ \tilde{K}_f^* & K_f^*
    \end{bmatrix}
\end{equation}
where $K_f$, $\tilde{K}_f$ are the Hermitian and complementary covariances of
$f$. We use the augmented form in some proofs in the following section.
Similarly, the second-order statistics of a complex process can be
expressed by the
\emph{augmented covariance function}
\begin{equation}
    \aug{k}(z,w) =
    \Ex{
    \begin{bmatrix}
        f(z) \\ f^*(z)
    \end{bmatrix}
    \begin{bmatrix}
        f^*(w) & f(w)
    \end{bmatrix}
    }
    =
    \begin{bmatrix}
        k(z,w) & \tilde{k}(z,w) \\ \tilde{k}^*(z,w) & k^*(z,w)
    \end{bmatrix}
\end{equation}
Generally, an \aug{underline} denotes an augmented representation,
either of a process or of a covariance function or matrix.
\end{extendedonly}

\subsection{$H_\infty$ Gaussian Processes}%
\label{sub:z_domain_and_h_infty_gaussian_processes}

Consider a deterministic input-output operator $H_g$ with transfer function
function $g:D\to\C$. The condition that $H_g$ belong to the operator space
$H^\infty$ of LTI, causal, and BIBO stable systems is that $g$ belong to the
function space $H_\infty$. Now suppose we wish to construct 
a random operator $H_f$ using the realizations of a z-domain process $f$ as its
transfer function: the analogous condition is that the realizations of $f$
lie in $H_\infty$ with probability one.

\begin{definition}
    \label{def:hinf_process}
    A z-domain process is called an $H_\infty$ process when the set
    $\{\xi\in\Xi : f_\xi \in H_\infty\}$ has measure one under $\mathbb{P}$.
\end{definition}

Less formally, an $H_\infty$ process is a z-domain process $f$ such that
$\mathbb{P}(f\in H_\infty)=1$. Having $f_\xi\in H_\infty$ implies that
$\bar{E}\subseteq D$: we usually take $D=\bar{E}$.
If we also require that $H_g$ give real
outputs to real inputs in the time domain, $g$ must satisfy the
conjugate symmetry relation $g(z^*)=g^*(z)$ for all $z\in D$. The analogous
condition for $H_f$ is to require that $f$ satisfy the condition with
probability one.

\begin{definition}
    \label{def:conjugate_symmetric}
    A z-domain process $f$ is called conjugate symmetric when the set
    $\{\xi\in\Xi : f_\xi(z^*)=f_\xi^*(z),\ \forall z\in D\}$ has measure one under $\mathbb{P}$.
\end{definition}

Combining definitions~\ref{def:gaussian_process},~\ref{def:hinf_process},
and~\ref{def:conjugate_symmetric}, we
arrive at our main object of study: conjugate-symmetric $H_\infty$ Gaussian
processes. 

\begin{example}[``Cozine'' process]
    \label{ex:cozine}
    The random transfer function
    \begin{equation}
        \label{eq:cozine_form}
        f(z) = \frac{X - a(X\cos(\omega_0) - Y\sin(\omega_0))z^{-1}}
                    {1-2a\cos(\omega_0)z^{-1} + a^2z^{-2}},
    \end{equation}
    where $X,Y\simiid\Norm{0,1}$, $a\in(0,1)$, $\omega_0\in[0,\pi]$,
    is a z-domain Gaussian process. From the form of the transfer function, we
    see that $H$ is bounded on the unit circle, analytic on $E$, and conjugate
    symmetric with probability one, from which it follows that $f$ is 
    a conjugate symmetric $H_\infty$ process.
    Since $f$ corresponds to the z-transform of an exponentially decaying
    discrete cosine with random magnitude and phase, we call it a
    \emph{``cozine'' process}.
    Its Hermitian and complementary covariances are
    \begin{equation}
        \begin{aligned}
        \label{eq:cozine_covariances}
            k(z,w) 
            &= 
            \frac{1-a\cos(\omega_0)(z^{-1}+(w^*)^{-1}) + a^2(zw^*)^{-1}}
            {(1-2a\cos(\omega_0)z^{-1} + a^2z^{-2})(1-2a\cos(\omega_0)(w^*)^{-1} + a^2(w^*)^{-2})},
            \\
            \tilde{k}(z,w) 
            &= 
            \frac{1-a\cos(\omega_0)(z^{-1}+w^{-1}) + a^2(zw)^{-1}}
            {(1-2a\cos(\omega_0)z^{-1} + a^2z^{-2})(1-2a\cos(\omega_0)w^{-1} + a^2w^{-2})}.
        \end{aligned}
    \end{equation}
\end{example}
As a Bayesian prior for an $H^\infty$ system, this process represents a belief
that the transfer function exhibits a
resonance peak (of unknown magnitude) at $\omega_0$. Knowing $\omega_0$ in
advance is a strong belief, but it can be relaxed by taking a hierarchical model
where $\omega_0$ enters as a hyperparameter. When used as a prior, the
hierarchical model
represents the less determinate belief that there is a resonance peak
\emph{somewhere}, whose magnitude can be made arbitrarily small if no peak is
evident in the data. 

The construction in Example~\ref{ex:cozine}, where properties of conjugate
symmetry and BIBO stability can be checked directly, may be extended to random
transfer functions of any finite order. However, the technique does not carry to
the infinite-order $H_\infty$ processes required for nonparametric Bayesian
system identification, or more generally for applications that do not place an \emph{a
priori} restriction on the order of the system. 
We are therefore motivated to find conditions under which a z-domain process is
a conjugate-symmetric $H_\infty$ Gaussian process expressed directly in
terms of $k$ and $\tilde{k}$.

\section{Constructing $H_\infty$ Gaussian Processes}
\label{sec:hinf-processes}

\begin{extendedonly}
In this section, we consider a z-domain Gaussian process $f$ with zero mean,
Hermitian covariance function $k$, and complementary covariance function
$\tilde{k}$. Taking zero mean implies no loss in generality: to lift any of
these conditions to a process with nonzero mean, we simply ask that the desired
property (inhabiting $H_\infty$, possessing conjugate symmetry, or both) also
hold for the mean.
\end{extendedonly}

\begin{extendedonly}

Our first step towards finding conditions under which a z-domain process is a
conjugate-symmetric $H_\infty$ process is the
observation that
all functions in $H_\infty$ are also in $H_2$: $\normi{f}<\infty$ implies that 
$\int_{-\pi}^{\pi} |f(Re^{j\Omega})|^2d\Omega$
converges for all $R\ge 1$. 
Indeed, $f\in H_\infty$ precisely when $f\in H_2$ and $\normi{f}<\infty$. Our
general strategy for proving that a z-domain process is an $H_\infty$ process
$f$ is to show that $f_\xi\in H_2$ and $\normi{f_\xi}<\infty$ hold for realizations
$f_\xi$ of the process with probability one.

To show when $f\in H_2$ with probability one,
we use the fact that $H_2$ is a reproducing kernel Hilbert space.
\begin{definition}
    A reproducing kernel Hilbert space (RKHS) is a Hilbert space of functions
    on a domain $D$ for which the \emph{evaluation functionals} $E_z$, defined
    pointwise as $E_z f = f(z)$, are bounded in the sense that 
    $E_z f \le M(z) \norm{f}$ for some nonnegative function $M(z)$.
\end{definition}
Applying the Riesz representation theorem to the evaluation functionals, which
are linear and by assumption bounded, we recover the \emph{reproducing kernel}
$k$ that satisfies $\langle k(z,\cdot),f\rangle = f(z)$ and that $k(z,z)$ is the least $M(z)$
such that $E_z f \le M(z) \norm{f}$ holds.
The form of $k$ can be derived from an orthonormal basis for the RKHS using the
following result.

\begin{lemma}[\cite{RKHS}, Theorem 2.4]
    \label{lem:basis_kernel}
    Let $H$ be an RKHS with reproducing kernel $k$. If $e_0,e_1,\dotsc$ form an
    orthonormal basis for $H$, then
    $k(z,w) = \sum_{n=0}^{\infty} e_n(z)e^*_n(w)$ where the series converges
    pointwise.
\end{lemma}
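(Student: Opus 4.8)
The plan is to expand the reproducing-kernel section $k_w := k(\cdot,w)$ in the given orthonormal basis, read off its Fourier coefficients from the reproducing property, and then move the resulting infinite sum inside a pointwise evaluation — the last step being legitimate precisely because the evaluation functionals of an RKHS are bounded.

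Concretely, fix $w\in D$. By definition of the reproducing kernel, $k_w\in H$, so $\{e_n\}_{n\ge 0}$ being an orthonormal basis gives the norm-convergent expansion $k_w=\sum_{n=0}^\infty \langle k_w,e_n\rangle\, e_n$. The reproducing property $f(w)=\langle f,k(\cdot,w)\rangle$, valid for every $f\in H$, applied to $f=e_n$ yields $\langle e_n,k_w\rangle=e_n(w)$, hence $\langle k_w,e_n\rangle=\overline{e_n(w)}=e_n^*(w)$. Therefore $k_w=\sum_{n=0}^\infty e_n^*(w)\,e_n$, with convergence in the norm of $H$.

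Now I would evaluate at an arbitrary $z\in D$. Writing $E_z$ for the evaluation functional and $M(z)$ for its bound (which exists by the RKHS assumption), applying $E_z$ to the partial sums and using boundedness gives
\begin{equation}
    \left| k(z,w) - \sum_{n=0}^{N} e_n(z)\,e_n^*(w) \right|
    = \left| E_z\Bigl( k_w - \sum_{n=0}^{N} e_n^*(w)\,e_n \Bigr) \right|
    \le M(z)\,\norm{ k_w - \sum_{n=0}^{N} e_n^*(w)\,e_n }
    \xrightarrow[N\to\infty]{} 0 ,
\end{equation}
so $\sum_{n=0}^\infty e_n(z)\,e_n^*(w)$ converges to $k(z,w)$ for every pair $(z,w)\in D\times D$, which is the claim. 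As a byproduct one gets absolute convergence as well: Parseval applied to $k_w$ gives $\sum_n|e_n(w)|^2=\norm{k_w}^2=\langle k_w,k_w\rangle=k(w,w)$ and likewise $\sum_n|e_n(z)|^2=k(z,z)$, so Cauchy--Schwarz yields $\sum_n|e_n(z)|\,|e_n^*(w)|\le\sqrt{k(z,z)\,k(w,w)}<\infty$. The only point needing care is the interchange of the infinite sum with evaluation at a fixed point, and that is exactly what boundedness of the evaluation functionals provides; everything else is the routine machinery of orthonormal expansions in a Hilbert space, so I do not expect any real obstacle here.
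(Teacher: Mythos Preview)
Your argument is correct and is the standard proof of this classical fact about reproducing kernels. Note, however, that the paper does not actually supply its own proof of this lemma: it is stated with a citation (Theorem~2.4 of the reference labeled \texttt{RKHS}) and then invoked as a black box in the proof of Proposition~\ref{thm:h2_rkhs}. So there is no paper proof to compare against; your expansion of $k_w$ in the orthonormal basis, identification of the Fourier coefficients via the reproducing property, and passage to pointwise convergence through boundedness of $E_z$ is exactly the textbook route and would serve perfectly well as a self-contained replacement for the citation.
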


Discrete-time $H_2$ is a type of \emph{Hardy class}, which is a space of complex
functions that are analytic on a domain of the complex plane and satisfy a
bounded-growth condition on the boundary. When this domain is a
half-plane or the interior of the unit disk, it is well known that these spaces
are RKHSs. It is therefore not surprising that the same is true when the domain
is the exterior of the unit disk, as it is for our $H_2$. However, we are not
aware of a citable proof of this fact, nor of formula for its kernel, so we
provide both here for completeness.

\begin{proposition}
    \label{thm:h2_rkhs}
    Discrete-time $H_2$ is a reproducing kernel Hilbert space with kernel
    function
    \begin{equation*}
        k(z,w)=\frac{zw^*}{zw^*-1}
    \end{equation*}
    and orthonormal basis $\{e_n\}_{n=0}^\infty$,  $e_n(z) = z^{-n}$.
\end{proposition}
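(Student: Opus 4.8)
The plan is to verify directly that $\{e_n\}_{n=0}^\infty$ with $e_n(z)=z^{-n}$ is an orthonormal basis for discrete-time $H_2$, and then invoke Lemma~\ref{lem:basis_kernel} to obtain the kernel formula as a closed-form geometric series. The fact that $H_2$ is an RKHS will follow from the same basis computation together with the convergence of that series for $z,w\in\bar{E}$.

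First I would establish orthonormality. Since $e_n(e^{j\Omega})=e^{-jn\Omega}$, the inner product $\langle e_n,e_m\rangle_2 = \int_{-\pi}^{\pi} e^{-jn\Omega}e^{jm\Omega}d\Omega = 2\pi\,\delta_{nm}$ is the standard Fourier orthogonality relation (after a harmless normalization constant, which I would absorb into the inner product or note explicitly). Each $e_n$ is bounded and analytic on $E$, so $e_n\in H_2$. Next I would argue completeness: a function $f\in H_2$ is analytic on $E$ and hence, writing $z^{-1}=\zeta$ with $|\zeta|<1$, corresponds to a function analytic on the open unit disk with square-integrable boundary values, whose Taylor expansion $f=\sum_{n=0}^\infty a_n \zeta^n$ converges in $L_2$ of the circle; translating back, $f(z)=\sum_{n=0}^\infty a_n z^{-n}$ with $\sum|a_n|^2<\infty$, which is exactly the statement that $\{e_n\}$ spans a dense subspace. (Alternatively: the span of $\{e^{-jn\Omega}\}_{n\ge 0}$ is the classical Hardy space $H^2$ of the disk viewed on the circle, a standard fact.) Thus $\{e_n\}$ is an orthonormal basis.

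Given the basis, Lemma~\ref{lem:basis_kernel} yields $k(z,w)=\sum_{n=0}^\infty e_n(z)e_n^*(w) = \sum_{n=0}^\infty z^{-n}(w^*)^{-n} = \sum_{n=0}^\infty (zw^*)^{-n}$. For $z,w\in\bar{E}$ we have $|zw^*|\ge 1$, so $|(zw^*)^{-1}|\le 1$; the series converges (absolutely, and for $z,w\in E$ uniformly on compacta) to $\frac{1}{1-(zw^*)^{-1}} = \frac{zw^*}{zw^*-1}$, which is the claimed formula. Pointwise convergence of the series is precisely what Lemma~\ref{lem:basis_kernel} guarantees, and the reproducing property plus boundedness of the evaluation functionals (with $M(z)^2 = k(z,z) = \frac{|z|^2}{|z|^2-1}$ for $z\in E$) then certify that $H_2$ is an RKHS with this kernel.

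The main obstacle is the completeness half of the orthonormal-basis claim: one must be careful about what "discrete-time $H_2$" means on the boundary circle versus on the exterior, and invoke (or reprove) the correspondence between $H_2$ functions analytic on $E$ and their boundary traces in $L_2[-\pi,\pi]$ — essentially a version of the Paley–Wiener / Hardy space theory transplanted from the disk to its exterior via $z\mapsto z^{-1}$. I would either cite the classical disk result and apply the conformal change of variable, or sketch the argument that the negative-frequency Fourier modes are orthogonal to every $f\in H_2$ (since such $f$ extends analytically to $E$, its Fourier coefficients $\hat f(n)$ vanish for $n<0$), which forces the nonnegative modes to be complete. Everything else is routine: the orthonormality is a Fourier integral and the kernel is a geometric series.
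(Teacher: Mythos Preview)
Your proposal is correct and follows essentially the same route as the paper: establish that $\{z^{-n}\}_{n\ge 0}$ is an orthonormal basis for $H_2$, bound the evaluation functionals via Cauchy--Schwarz on the basis expansion (the paper writes this explicitly as $|f(z)|\le \|f\|_2/\sqrt{1-|z|^{-2}}$, which is precisely your $M(z)=\sqrt{k(z,z)}$), and then sum the geometric series to obtain the kernel. The only cosmetic difference is ordering: the paper first proves boundedness of evaluations and \emph{then} invokes Lemma~\ref{lem:basis_kernel}, whereas you cite the lemma before its RKHS hypothesis is verified---swapping those two steps would make your argument logically clean.
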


\begin{proof}
    The first step is to compute the orthonormal basis. We begin with the standard fact 
    (see~\cite[theorem 13.3]{young1988hilbert})
    that $e_n(z) = z^{-n}, n\in\Z$ form an orthonormal basis for $L_2$. 
    Since $H_2$ is a
    subspace of $L_2$, we can take the orthogonal projection of $z^{-n}, n\in\Z$
    onto $H_2$ to yield a sequence that spans $H_2$. For $n<0$, $z^{-n}$ is
    unbounded on the exterior of the unit disk, so $P(e_n) = 0$ for $n<0$. On the other
    hand, $z^{-n}\in H_2$ for $n\ge0$, so $P(e_n) = e_n$ for $n\ge 0$.
    Discarding the zeros, we have that $H_2=\text{span}(\{e_n\}_{n=0}^\infty)$.
    Since $L_2$ and $H_2$ have the same norm, we already know that $e_n$, $n\ge
    0$ are orthonormal in $H_2$. The combined facts of orthonormality and
    spanning the space ensure 
    (e.g. by~\cite[chapter 2, \S 8, Theorem 3]{helmberg1969hilbert}) 
    that $e_n, n\ge 0$ form an orthonormal basis for $H_2$.

    Now we establish that the evaluation functionals $E_z(f) = f(z)$ are
    bounded. Let $f\in H_2$: from the paragraph above, we expand $f$ as
    $f(z) = \sum_{n=0}^\infty a_n z^{-n}$. By the Parseval identity
    $\norm{f}_2=\sum_{n=0}^\infty |a_n|^2$, we have
    \begin{equation}
        \begin{aligned}
            \label{eq:}
            |f(z)| = \left|\sum_{n=0}^\infty a_n z^{-n}\right|
            \le \sum_{n=0}^\infty |a_n| |z|^{-n}
            \le \sqrt{\sum_{n=0}^\infty |a_n|^2}
            \sqrt{\sum_{n=0}^\infty (|z|^{-2})^{n}}
            = \norm{f}_2 \frac{1}{\sqrt{1-|z|^{-2}}}.
        \end{aligned}
    \end{equation}
    Since $E_z(f) \le M(z)\norm{f}_2$ for $M(z)=1/\sqrt{1-|z|^{-2}}$, it follows that $H_2$ is an RKHS.
    This allows us to apply Lemma~\ref{lem:basis_kernel} to compute
    \begin{equation}
          k(z,w)
          = \sum_{n=0}^\infty e_n(z)e^*_n(w)
          = \sum_{n=0}^\infty z^{-n}(w^*)^{-n}
          = \frac{1}{1-(zw^*)^{-1}}
          = \frac{zw^*}{zw^* - 1}.
    \end{equation}

\end{proof}

The fact that $H_2$ is an RKHS allows us to use Driscoll's zero-one theorem to
establish if the realizations of a z-domain Gaussian process belong to $H_2$.

\begin{lemma}[\cite{driscoll1973reproducing}]
    \label{lem:driscoll}
    Let $f$ be a mean zero Gaussian process on a parameter set $T$ with covariance
    function $k$. Let $r$ be the reproducing kernel of an RKHS of functions with
    domain $T$. Let $t_1,t_2,\dotsc$ denote a countably dense set of points in
    $T$, and define 
    $K^n, R^n \in \R^{n\times n}$
    as $(K_n)_{ij}=k(t_i, t_j)$, $(R_n)_{ij}=r(t_i,t_j)$. Then the realizations
    of $f$ are in the RKHS with kernel $k$ with probability either zero or one,
    according respectively to whether $\sum_n \trace K_nR_n^{-1}$ is infinite or
    finite.
\end{lemma}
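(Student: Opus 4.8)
\begin{proofsketch}
This is a classical fact from the theory of Gaussian processes, so the following just recalls the standard line of argument. Write $\mathcal H_r$ for the RKHS with reproducing kernel $r$ and $\mathcal H_k$ for the RKHS of the covariance $k$. The plan is to convert the statement ``$f\in\mathcal H_r$'' into the almost-sure finiteness of a monotone sequence of Gaussian quadratic forms, and then to resolve that by a white-noise expansion together with a zero--one law.

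\textbf{Step 1 (reduce to a monotone quadratic form).} Since $r$ reproduces $\mathcal H_r$ and $\{t_i\}$ is dense in $T$, the nested subspaces $S_n:=\operatorname{span}\{r(t_i,\cdot):i\le n\}$ increase to a dense subspace of $\mathcal H_r$, and for any $g\in\mathcal H_r$ the orthogonal projection $P_{S_n}g$ is the minimum-norm element of $S_n$ interpolating $g$ at $t_1,\dots,t_n$, with $\lVert P_{S_n}g\rVert_{\mathcal H_r}^2=v_n^{H}R_n^{-1}v_n$ where $v_n=(g(t_1),\dots,g(t_n))^{H}$. Applying this to the random path, put $V_n=(f(t_1),\dots,f(t_n))$, a mean-zero Gaussian vector with covariance $K_n$, and $Q_n=V_n^{H}R_n^{-1}V_n$. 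The inclusion $\{f\in\mathcal H_r\}\subseteq\{\sup_nQ_n<\infty\}$ is immediate from $Q_n=\lVert P_{S_n}f\rVert_{\mathcal H_r}^2\le\lVert f\rVert_{\mathcal H_r}^2$; conversely, when $\sup_nQ_n<\infty$ the interpolants are Cauchy in $\mathcal H_r$ (since $\lVert P_{S_n}f-P_{S_m}f\rVert_{\mathcal H_r}^2=Q_n-Q_m$ for $n>m$) and converge to some $\hat f\in\mathcal H_r$ agreeing with $f$ on the dense set. Because $Q_n$ is non-decreasing, $\sup_nQ_n=\lim_nQ_n=:Q_\infty$, so the problem reduces to deciding whether $Q_\infty<\infty$ almost surely.

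\textbf{Step 2 (diagonalize and conclude).} Represent $f$ by a white-noise expansion $f=\sum_j\xi_jh_j$ with $\{\xi_j\}$ i.i.d.\ standard Gaussians and $\{h_j\}$ an orthonormal basis of $\mathcal H_k$, the series converging pointwise almost surely. If $\mathbb P(f\in\mathcal H_r)>0$, a Cameron--Martin argument forces a continuous inclusion $\iota:\mathcal H_k\hookrightarrow\mathcal H_r$ to exist, and then $\{f\in\mathcal H_r\}$ agrees, up to a null set, with the event that the Gaussian series $\sum_j\xi_j\,\iota h_j$ converges in $\mathcal H_r$. By the It\^o--Nisio zero--one dichotomy this event has probability $0$ or $1$, and it has probability $1$ precisely when $\sum_j\lVert\iota h_j\rVert_{\mathcal H_r}^2=\trace(\iota^{*}\iota)<\infty$. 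The link to the statement is the identity $\trace(\iota^{*}\iota)=\lim_n\trace(K_nR_n^{-1})$ (which is the summability condition in the statement, as the partial traces are non-decreasing in $n$), and this follows from $\trace(K_nR_n^{-1})=\Ex{Q_n}$ together with monotone convergence applied to $Q_n\uparrow Q_\infty=\sum_j|\xi_j|^2\lVert\iota h_j\rVert_{\mathcal H_r}^2$; when $\mathbb P(f\in\mathcal H_r)=0$ outright, $\mathcal H_k\not\subseteq\mathcal H_r$, the trace is infinite, and there is nothing left to prove.

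\textbf{Main obstacle.} The delicate point is the ``$\supseteq$'' half of Step~1 (and its analogue in Step~2): showing that when the $Q_n$ stay bounded, the interpolant limit $\hat f$ is genuinely the sample path $f$ rather than merely some other element of $\mathcal H_r$ sharing its values on $\{t_i\}$. This is exactly where a separability (or measurability) hypothesis on the Gaussian process is used, and it is why Driscoll's theorem is always stated under such an assumption; the only other nontrivial bookkeeping is the identity $\lim_n\trace(K_nR_n^{-1})=\trace(\iota^{*}\iota)$ and the verification that $\{\iota h_j\}$ spans a large enough subspace of $\mathcal H_r$ for the converse direction of Step~2.
\end{proofsketch}
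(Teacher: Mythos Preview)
The paper does not supply a proof of this lemma: it is quoted verbatim from \cite{driscoll1973reproducing} and used as a black box in the proof of Theorem~\ref{prop:hinf-conditions}. So there is no ``paper's own proof'' to compare against.

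That said, your sketch is a faithful outline of the classical argument. The reduction in Step~1 to the monotone quadratic forms $Q_n=V_n^{H}R_n^{-1}V_n$ and the identity $\Ex{Q_n}=\trace(K_nR_n^{-1})$ are exactly the bridge between sample-path membership and the trace condition, and your Step~2 correctly invokes a Gaussian series representation plus a zero--one dichotomy to finish. Two minor remarks: first, the statement as written in the paper has ``$\sum_n\trace K_nR_n^{-1}$,'' which is almost certainly a typo for $\sup_n$ or $\lim_n$ (compare condition~\eqref{eq:hinf-conditions-driscoll} in Theorem~\ref{prop:hinf-conditions}, which uses $\sup_n$); you correctly interpret it as the limit of a non-decreasing sequence. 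Second, your ``main obstacle'' paragraph is on point: the identification of the $\mathcal H_r$-limit $\hat f$ with the actual sample path $f$ is where separability (or an equivalent regularity hypothesis) is genuinely needed, and Driscoll's original statement carries such an assumption. Beyond that, the sketch is sound.
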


To ensure that $\normi{f}<\infty$, we need a sufficient condition under which
the realizations of $f$ are bounded on the unit circle.
The following result provides a sufficient condition in
terms of the continuity of the covariance.

\begin{lemma}[\cite{RFG}, Theorem 1.4.1]
    \label{lem:boundedness_condition}
    Let $f$ be a real-valued Gaussian process with mean zero defined on a
    compact parameter set $T\subseteq R^n$. If there exist positive constants
    $C$, $\alpha$, and $\delta$ such that the covariance function $k$ satisfies
    \begin{equation}
        k(s,t) = k(s,s) + k(t,t) - 2k(s,t)
        \le \frac{C}{|\log|\theta-\phi||^{1+\alpha}}\\
    \end{equation}
    for $s,t\in T$ such that $|s-t|<\delta$, then 
    \begin{equation}
        P\left(\sup_{t\in T} |f(t)| < \infty \right) = 1.
    \end{equation}
\end{lemma}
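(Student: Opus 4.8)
This is a classical sample-boundedness criterion for Gaussian processes, and the route I would take is Dudley's metric-entropy bound; the only substantive work is converting the stated logarithmic modulus of continuity into a metric-entropy estimate. The relevant object is the \emph{canonical pseudometric} of the process,
\[
  d(s,t)^2 \;=\; \Ex{(f(s)-f(t))^2} \;=\; k(s,s)+k(t,t)-2k(s,t),
\]
which is exactly the quantity written on the left-hand side of the hypothesis (reading the stray $|\theta-\phi|$ as $|s-t|$). Thus the assumption is $d(s,t)\le\sqrt{C}\,\bigl|\log|s-t|\bigr|^{-(1+\alpha)/2}$ whenever $|s-t|<\delta$. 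Since this bound makes $d$ continuous near the diagonal and $T$ is compact, $d$ is a bounded pseudometric on $T$, of some diameter $D_0<\infty$; hence the covering numbers $N(\epsilon):=N(T,d,\epsilon)$ are finite, equal to $1$ once $\epsilon\ge D_0$, and only their growth as $\epsilon\to0$ is at issue.

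The crux is the entropy estimate. For $u,v\in T$ with $|u-v|\le\rho$ and $\rho<\delta$, monotonicity of $x\mapsto\bigl|\log x\bigr|$ gives $d(u,v)\le\sqrt{C}\,\bigl|\log\rho\bigr|^{-(1+\alpha)/2}$, so a Euclidean ball of radius $\rho$ is contained in a $d$-ball of radius $\epsilon$ as soon as $\rho\le\exp\!\bigl(-C^{1/(1+\alpha)}\,\epsilon^{-2/(1+\alpha)}\bigr)$. Because $T$ is a bounded subset of $\R^n$, it can be covered by at most a constant multiple of $\rho^{-n}$ such Euclidean balls, and therefore
\[
  \log N(\epsilon)\;\le\; C_1\,\epsilon^{-2/(1+\alpha)}\qquad\text{for all sufficiently small }\epsilon,
\]
with $C_1$ depending only on $n$, $C$, and $\mathrm{diam}(T)$.

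It then remains to verify the Dudley entropy integral and to cite the standard conclusion. Splitting the integral at $D_0$ and using the estimate above,
\[
  \int_0^{D_0}\sqrt{\log N(\epsilon)}\,d\epsilon \;\le\; C_2 \;+\; \sqrt{C_1}\int_0^{\epsilon_0}\epsilon^{-1/(1+\alpha)}\,d\epsilon \;<\;\infty,
\]
the remaining integral converging precisely because $\alpha>0$ makes the exponent $1/(1+\alpha)$ strictly less than $1$; this is the one place the strict positivity of $\alpha$ enters. By Dudley's theorem --- which is what \cite{RFG}, Theorem~1.4.1 records --- a centered Gaussian process whose canonical pseudometric has a convergent entropy integral is almost surely bounded on $T$ (indeed uniformly $d$-continuous), so in particular $P\!\bigl(\sup_{t\in T}|f(t)|<\infty\bigr)=1$.

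I expect the covering-number step to be the main obstacle, chiefly the bookkeeping forced by the modulus bound being assumed only for $|s-t|<\delta$: the comparison between $d$-balls and Euclidean balls is valid only at small scales, so one must separately observe that the large-$\epsilon$ part of the entropy integral is harmless because $T$ is $d$-bounded. Once $\log N(\epsilon)$ is controlled by a constant times $\epsilon^{-2/(1+\alpha)}$, the remainder is a routine invocation of the chaining bound, and indeed it would be reasonable in the paper simply to cite \cite{RFG} rather than reproduce the argument.
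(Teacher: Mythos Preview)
Your argument is correct, but note that the paper does not actually prove this lemma: it is quoted verbatim as Theorem~1.4.1 of \cite{RFG} and used as a black box. Your Dudley entropy-integral derivation is precisely the standard route by which such logarithmic-modulus criteria are established in that reference, and you yourself anticipate this in your final sentence.
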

We are now prepared to return to $H_\infty$ Gaussian processes.\ 
\end{extendedonly}
 The following result provides the general test to determine if $f$ is an
$H_\infty$ Gaussian process, by establishing with probability one that $f_\xi\in
H_2$ and $\normi{f_\xi}<\infty$. 

\begin{theorem}
    \label{prop:hinf-conditions}
     Let $f$ be a z-domain Gaussian process with mean zero and
     continuous Hermitian covariance $k$ and complementary covariance $\tilde{k}$.
     Let 
     $k_r=\tfrac{1}{2}\tRe{k+\tilde{k}}$, 
     $k_i=\tfrac{1}{2}\tRe{k-\tilde{k}}$
     denote the covariance functions of the real and imaginary parts of $f$
     respectively.
     Then $f$ is an $H_\infty$ process
     under the
     following conditions:
     \begin{enumerate}
         \item There exist positive, finite constants
             $C_r$, $C_i$, $\alpha_r$, $\alpha_i$, $\delta_r$, $\delta_i$,
             such that
             $k_r$ and $k_i$, restricted to the unit circle, satisfy the
             following continuity conditions:
             \begin{equation}
                 \begin{aligned}
                     \label{eq:hinf-conditions-boundedvar}
                       k_r(e^{j\theta},e^{j\theta})
                     + k_r(e^{j\phi},e^{j\phi})
                     - 2 k_r(e^{j\theta},e^{j\phi})
                     &\le \frac{C_r}{|\log|\theta-\phi||^{1+\alpha_r}}
                     \quad
                     \forall |\theta-\phi| < \delta_r
                     \\
                       k_i(e^{j\theta},e^{j\theta})
                     + k_i(e^{j\phi},e^{j\phi})
                     - 2 k_i(e^{j\theta},e^{j\phi})
                     &\le \frac{C_i}{|\log|\theta-\phi||^{1+\alpha_i}}
                     \quad 
                     \forall |\theta-\phi| < \delta_i.
                 \end{aligned}
             \end{equation}
         \item 
             Let $\{z_n\}_{n=1}^\infty$ be a countable dense sequence of points in
             $E$. For $n\in\N$, define the Gramian
             matrices
             $K_r^n, K_i^n, R^n \in \R^{n\times n}$
             as 
             $(K_r^n)_{jl}=k_r(z_j, z_l)$, 
             $(K_i^n)_{jl}=k_i(z_j, z_l)$, 
             and
             $(R^n)_{jl}=r(z_j,z_l)$, where $r(z_j,z_l)=z_iz^*_j/(z_iz^*_j - 1)$.
             $K_r^n$, $K_i^n$, and $R^n$ satisfy
     \begin{equation}
         \begin{aligned}
             \label{eq:hinf-conditions-driscoll}
             \sup_{n\in\N}\trace K_r^n (R^n)^{-1} < \infty
             \quad\mbox{and}\quad
             \sup_{n\in\N}\trace K_i^n (R^n)^{-1} < \infty.
         \end{aligned}
     \end{equation}
     \end{enumerate}
\end{theorem}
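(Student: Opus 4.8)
The plan is to reduce the complex-process statement to two real-process statements and then invoke, separately, the $H_2$-membership criterion (Lemma~\ref{lem:driscoll}, Driscoll's zero-one theorem) and the sample-path boundedness criterion (Lemma~\ref{lem:boundedness_condition}). The key structural observation is that $f\in H_\infty$ holds exactly when $f\in H_2$ and $\normi{f}<\infty$, so it suffices to establish each of these two events has probability one. Writing $f = f_r + j f_i$ with $f_r=\Re{f}$, $f_i=\Im{f}$, I would first record that $f_r$ and $f_i$ are themselves (real) Gaussian processes, and that the stated $k_r,k_i$ are indeed their covariance functions; this follows from the definitions $k_r=\tfrac12\tRe{k+\tilde k}$, $k_i=\tfrac12\tRe{k-\tilde k}$ together with $\Ex{f(z)f^*(w)}=k$, $\Ex{f(z)f(w)}=\tilde k$ and a short real/imaginary-part computation. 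Continuity of $k$ and $\tilde k$ transfers to continuity of $k_r,k_i$.

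For the boundedness event, I would restrict attention to the unit circle $\{e^{j\theta}\}$ and apply Lemma~\ref{lem:boundedness_condition} to $f_r$ and to $f_i$ separately, using hypothesis~(1): the two displayed log-modulus continuity bounds are precisely the hypotheses of that lemma (with parameter set the compact circle, parametrized by $\theta\in[-\pi,\pi]$), so each of $\sup_\theta |f_r(e^{j\theta})|$ and $\sup_\theta|f_i(e^{j\theta})|$ is finite almost surely. Since $|f(e^{j\theta})|\le |f_r(e^{j\theta})|+|f_i(e^{j\theta})|$, we get $\normi{f}<\infty$ a.s. For the $H_2$-membership event, I would use that $H_2$ is an RKHS with reproducing kernel $r(z,w)=zw^*/(zw^*-1)$ (Proposition~\ref{thm:h2_rkhs}), and apply Driscoll's theorem to $f_r$ and $f_i$ with the dense sequence $\{z_n\}\subseteq E$: hypothesis~(2) says $\sup_n \trace K_r^n (R^n)^{-1}<\infty$ and likewise for $K_i^n$, which (since the sequence of traces is nondecreasing in $n$, so the supremum equals the limit) is exactly Driscoll's finiteness condition, giving $f_r\in H_2$ and $f_i\in H_2$ each with probability one. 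A real Hilbert-space-valued combination argument — $H_2$ is closed under the complexification, i.e. $f=f_r+jf_i\in H_2$ whenever $f_r,f_i\in H_2$ viewed as real functions with the appropriate conjugate-symmetry bookkeeping — then yields $f\in H_2$ a.s. Intersecting the two probability-one events gives $f\in H_2$ and $\normi f<\infty$ simultaneously a.s., hence $f\in H_\infty$ a.s.

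The step I expect to be the main obstacle is the passage from ``$f_r\in H_2$ and $f_i\in H_2$'' to ``$f\in H_2$,'' because Driscoll's theorem as quoted is stated for real Gaussian processes against a \emph{real} RKHS, whereas $H_2$ is a complex Hilbert space, and the kernel $r(z,w)$ is complex-valued off the diagonal. Care is needed to set up the right real RKHS: one natural route is to decompose a function $g\in H_2$ into real and imaginary parts of its Taylor coefficients, $g(z)=\sum_n a_n z^{-n}$ with $a_n=\alpha_n+j\beta_n$, and observe that the real sequences $\{\alpha_n\}$ and $\{\beta_n\}$ each live in $\ell^2$ with $\norm{g}_2^2 = \sum_n(\alpha_n^2+\beta_n^2)$, so $g\in H_2$ iff both coefficient sequences are square-summable; one then argues that the real processes $f_r,f_i$ having sample paths in the corresponding real RKHS (whose kernel restricted to real points matches what Driscoll's trace condition tests) is equivalent, after accounting for the fact that evaluating on $E$ rather than on the circle is exactly what makes the series converge. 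A cleaner alternative, which I would actually pursue, is to apply the zero-one dichotomy directly to the \emph{complex} process $f$ using the complex-RKHS version of Driscoll's theorem (the original statement does cover this), treating $k$ as the complex covariance and $r$ as the complex reproducing kernel; then hypothesis~(2) must be shown to be equivalent to the complex trace condition, which reduces to the identity $k = k_r + k_i$ on the diagonal-type quadratic forms that enter the trace — i.e. $\trace K^n(R^n)^{-1} = \trace K_r^n(R^n)^{-1} + \trace K_i^n(R^n)^{-1}$ when $R^n$ is real symmetric, since the imaginary parts of $K^n$ contribute nothing to the trace against a real matrix. I would present whichever of these is shorter once the bookkeeping is nailed down; the boundedness half (hypothesis~(1)) is routine given Lemma~\ref{lem:boundedness_condition}.
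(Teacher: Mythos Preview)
Your proposal is correct and follows essentially the same route as the paper: decompose $f=f_r+jf_i$, apply Lemma~\ref{lem:boundedness_condition} to each part under hypothesis~(1) to get $\normi{f}<\infty$ a.s., and apply Driscoll's theorem (Lemma~\ref{lem:driscoll}) to each part under hypothesis~(2) together with Proposition~\ref{thm:h2_rkhs} to get $f\in H_2$ a.s. The paper's proof is terser and simply invokes Driscoll for the real processes $x,y$ against the $H_2$ kernel without discussing the real-versus-complex RKHS mismatch you flag; your caution there is warranted and goes beyond what the paper spells out, but the overall architecture is identical.
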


\begin{shortonly}
\begin{proofsketch}
    Discrete-time $H_2$ is an RKHS with kernel
    $r(z,w)=zw^*/(zw^*-1)$. (This fact is proven in~(EXTENDED PAPER).)
    Condition~\eqref{eq:hinf-conditions-driscoll} then ensures by Driscoll's
    zero-one theorem~(\cite{driscoll1973reproducing}) that the sample paths of
    $f$ inhabit $H_2$ with probability one. 
    Condition~\eqref{eq:hinf-conditions-boundedvar} ensures 
    by~\cite[Theorem 1.4.1]{RFG} that the restriction of $f$ to the unit circle
    is bounded with probability one. Since an $H_\infty$ function is precisely
    an $H_2$ function whose values on the unit circle are bounded, it follows
    that $f$ inhabits $H_\infty$ with probability one.
\end{proofsketch}
\end{shortonly}

\begin{extendedonly}

\begin{proof}
    To apply Lemmas~\ref{lem:driscoll} and~\ref{lem:boundedness_condition}, we work separately with the real and
    imaginary parts of the process. To that end, we write
    $f = x + jy$, where $x$ and $y$ are real Gaussian processes with
    covariance functions $k_r$ and $k_i$.

    First, suppose that both conditions hold.
    Since $r$ is the reproducing kernel of $H_2$ by
    Proposition~\ref{thm:h2_rkhs},
    condition~\eqref{eq:hinf-conditions-driscoll} ensures by
    Lemma~\ref{lem:driscoll}
    zero-one theorem for Gaussian processes that the sample paths of $x$ and
    $y$ lie in $H_2$ with probability one, ensuring the same for $f$.
    Since $k_r$ and $k_i$ satisfy the hypotheses of
    Lemma~\ref{lem:boundedness_condition}, it follows that $x$ and $y$ are
    bounded with probability one, which implies the same for $f$.
\end{proof}

\end{extendedonly}

\begin{remark}
    According to Driscoll's theorem, the probability that $f\in H_2$ is either
    zero or one. (Zero occurs when either supremum in
    condition~\eqref{eq:hinf-conditions-driscoll} is infinite.) Similarly, the
    realizations of a Gaussian process are bounded with probability zero or
    one~(\cite{landau1970supremum}). This means that the realizations of
    a z-domain Gaussian process are either almost surely $H_\infty$ functions or
    almost surely not.
\end{remark} 

\begin{remark}
Condition~\eqref{eq:hinf-conditions-driscoll} is necessary and sufficient for
$f$ to inhabit $H_2$ with probability one. On the other hand,
condition~\eqref{eq:hinf-conditions-boundedvar} is sufficient but not necessary
for $\normi{f}$ to be bounded. Indeed, necessary and sufficient conditions for a
stochastic process to be almost surely bounded are generally not available even
for real-valued Gaussian processes except in special cases. 
Fortunately, covariance functions in practice often satisfy a stronger condition
that implies~\eqref{eq:hinf-conditions-boundedvar}~(\cite[eq. 2.5.17]{RFGA}),
namely that
$
    k(s,t) = k(s,s) - q(s-t) + O(|s-t|^{2+\delta}),
$
for small $|s-t|$, where $q$ is a positive definite quadratic form and $\delta>0$. 
\end{remark}

The general condition for a process to be conjugate symmetric is given by the
following result.

\begin{proposition}
    \label{prop:real-impulse}
    Let $f$ be a z-domain Gaussian process with domain $D$,
    covariance $k$, and complementary covariance $\tilde{k}$. 
    Then $f$ is conjugate-symmetric
    if and only if $k$ and $\tilde{k}$ satisfy the conditions
    \begin{equation}
        \label{eq:real_impulse_conditions}
        \begin{aligned}
            k(z,z) &= k(z^*, z^*), \qquad k(z,z) = \tilde{k}(z, z^*)
        \end{aligned}
    \end{equation}
    for all $z\in D$.
\end{proposition}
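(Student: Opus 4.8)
The plan is to recast the pointwise relation $f_\xi(z^*)=f_\xi^*(z)$ as a statement about the second-order statistics of a single auxiliary Gaussian process. Working, as in this section, with mean-zero $f$ (the nonzero-mean case being handled by additionally asking that the mean be conjugate symmetric), the process $g(z):=f(z^*)-f^*(z)$ is again a mean-zero Gaussian z-domain process on $D$, since it is a linear functional of the jointly Gaussian family $\{f(w),f^*(w):w\in D\}$. Conjugate symmetry of $f$ is then — modulo the reduction to a countable dense set discussed below — the statement that $g$ vanishes identically with probability one, and a mean-zero Gaussian random variable vanishes almost surely exactly when its Hermitian variance is zero. The computational core of the proof is therefore the identity
\begin{equation*}
  \Ex{|g(z)|^2} = k(z^*,z^*) + k(z,z) - 2\Re{\tilde k(z,z^*)},
\end{equation*}
obtained by expanding $|g(z)|^2=\bigl(f(z^*)-f^*(z)\bigr)\bigl(f^*(z^*)-f(z)\bigr)$, taking expectations term by term, and using the symmetry $\tilde k(z^*,z)=\tilde k(z,z^*)$ of the complementary covariance.

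For the ``if'' direction I would assume \eqref{eq:real_impulse_conditions} and read off $\Ex{|g(z)|^2}=0$ for every $z\in D$: the first condition replaces $k(z^*,z^*)$ by $k(z,z)$, and since $k(z,z)=\Ex{|f(z)|^2}$ is real and nonnegative, the second condition forces $\tilde k(z,z^*)$ to equal the real number $k(z,z)$, so $\Re{\tilde k(z,z^*)}=k(z,z)$ and the three terms cancel. Hence $g(z)=0$ almost surely for each fixed $z$. To promote this to ``$g\equiv 0$ almost surely,'' I would invoke Cauchy--Schwarz to get $\Ex{g(z)g^*(w)}=\Ex{g(z)g(w)}=0$ for all $z,w$, so every finite-dimensional marginal of $g$ is degenerate at zero; restricting to a countable dense $\{z_n\}\subseteq D$ gives the probability-one event $\{g(z_n)=0\ \forall n\}$, and sample-path continuity of the realizations (available in our setting, where the relevant processes have analytic sample paths) upgrades this to $g\equiv 0$ on all of $D$, i.e. conjugate symmetry.

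The ``only if'' direction is shorter and needs no regularity: if $f$ is conjugate symmetric then $f(z^*)=f^*(z)$ holds almost surely for each $z$, so one may substitute inside the expectations to get $k(z^*,z^*)=\Ex{f(z^*)f^*(z^*)}=\Ex{f^*(z)f(z)}=k(z,z)$ and $\tilde k(z,z^*)=\Ex{f(z)f(z^*)}=\Ex{f(z)f^*(z)}=k(z,z)$, which are precisely the conditions \eqref{eq:real_impulse_conditions}. The one delicate point, and the only genuine obstacle, is the passage from ``$g(z)=0$ a.s. for each $z$'' to ``$g$ vanishes identically a.s.'' — a real gap for an arbitrary stochastic process — which I would close using continuity of the sample paths together with a countable dense subset of $D$; everything else is a direct second-moment computation.
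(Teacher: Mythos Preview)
Your proof is correct and follows essentially the same approach as the paper: both reduce the question to showing that the Hermitian variance of $f(z^*)-f^*(z)$ vanishes precisely under \eqref{eq:real_impulse_conditions}, with the paper packaging the same computation in the augmented covariance matrix formalism while you expand $\Ex{|g(z)|^2}$ directly. You are in fact more careful than the paper on one point --- the passage from ``$g(z)=0$ a.s.\ for each fixed $z$'' to ``$g\equiv 0$ a.s.'' via a countable dense subset and sample-path continuity --- which the paper's proof simply elides.
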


\begin{shortonly}
    \begin{proofsketch}
        Under~\eqref{eq:real_impulse_conditions}, the joint distribution for $(f^*(z),f(z^*))$
        is a degenerate complex Gaussian distribution where both components are
        perfectly correlated with the same variance, and thus equal with probability one.
        If~\eqref{eq:real_impulse_conditions} doesn't hold, this cannot be true for all $z$.
    \end{proofsketch}
\end{shortonly}

\begin{extendedonly}

\begin{proof}
    For a fixed $z\in D$, consider the random vector 
    $(f^*(z),f(z^*))$. This is a multivariate complex normal whose augmented covariance matrix is
    \begin{equation}
        \begin{aligned}
        \Ex{
            \begin{bmatrix}
                f^*(z) \\ f(z^*) \\ f(z) \\ f^*(z^*)
            \end{bmatrix}
            \begin{bmatrix}
                f^*(z) & f(z^*) & f(z) & f^*(z^*)
            \end{bmatrix}
        }
        =
        \begin{bmatrix}
            k^*(z,z) & \tilde{k}^*(z,z^*) & \tilde{k}^*(z,z) & k^*(z,z^*) \\
            \tilde{k}(z^*,z) & k(z^*,z^*) & k(z^*,z) & \tilde{k}(z^*,z^*) \\
            \tilde{k}(z,z) & k(z,z^*) & k(z,z) & \tilde{k}(z,z^*) \\
            k^*(z^*,z) & \tilde{k}^*(z^*,z^*) & \tilde{k}^*(z^*,z) & k^*(z^*,z^*)
        \end{bmatrix}.
        \end{aligned}
    \end{equation}
    Since augmented covariance matrices have the form
    \begin{equation}
        \begin{bmatrix}
            K & \tilde{K} \\ \tilde{K}^* & K^*
        \end{bmatrix}
    \end{equation}
    where $K$ and $\tilde{K}$ are the Hermitian and complementary covariances,
    the Hermitian and complementary covariances of $(f^*(z),f(z^*))$ are
    \begin{equation}
        K
        =
        \begin{bmatrix}
            k^*(z,z) & \tilde{k}^*(z,z^*) \\
            \tilde{k}(z^*,z) & k(z^*,z^*)
        \end{bmatrix},
        \qquad
        \tilde{K}
        =
        \begin{bmatrix}
            \tilde{k}^*(z,z) & k^*(z,z^*) \\
            k(z^*,z) & \tilde{k}(z^*,z^*)
        \end{bmatrix}.
    \end{equation}
    Since $(f^*(z),f(z^*))$ is Gaussian with mean zero, this means that
    \begin{equation}
        \label{eq:f_fstar_distribution}
        \begin{bmatrix}
            f^*(z) \\ f(z^*)
        \end{bmatrix}
        \sim
        \CNorm{
            \begin{bmatrix}
                0 \\ 0
            \end{bmatrix},
            \begin{bmatrix}
                k^*(z,z) & \tilde{k}^*(z,z^*) \\
                \tilde{k}(z^*,z) & k(z^*,z^*)
            \end{bmatrix},
            \begin{bmatrix}
                \tilde{k}^*(z,z) & k^*(z,z^*) \\
                k(z^*,z) & \tilde{k}(z^*,z^*)
            \end{bmatrix}
        }.
    \end{equation}
    Under the conditions given on $k$ and $\tilde{k}$, this reduces to
    \begin{equation}
        \label{eq:f_fstar_reduction}
        \begin{bmatrix}
            f^*(z) \\ f(z^*)
        \end{bmatrix}
        \sim
        \CNorm{
            \begin{bmatrix}
                0 \\ 0
            \end{bmatrix},
            k(z,z)
            \begin{bmatrix}
                1 & 1 \\ 1 & 1
            \end{bmatrix},
            \tilde{k}(z,z)
            \begin{bmatrix}
                1 & 1 \\ 1 & 1
            \end{bmatrix}
        },
    \end{equation}
    from which it follows that
    \begin{equation}
        f^*(z) - f(z^*) =
        \begin{bmatrix}
            1 & -1
        \end{bmatrix}
        \begin{bmatrix}
            f^*(z) \\ f(z^*)
        \end{bmatrix}
        \sim
        \CNorm{0, 0, 0}.
    \end{equation}
    This means that $f^*(z) - f(z^*)=0$, or equivalently $f^*(z) = f(z^*)$, with
    probability one, for all $z\in D$.
    On the other hand, if the conditions
    in~\eqref{eq:real_impulse_conditions} are not met, then for at least one
    $z\in D$, the reduction from~\eqref{eq:f_fstar_distribution}
    to~\eqref{eq:f_fstar_reduction} is not possible, in which case
    $f(z^*)=f^*(z)$ does not hold.
\end{proof}

\end{extendedonly}

Together, Theorem~\ref{prop:hinf-conditions} and Proposition~\ref{prop:real-impulse}
give sufficient conditions on the covariance functions of a
general mean-zero z-domain Gaussian process in order for it to be a
conjugate-symmetric $H_\infty$ Gaussian
process. While Conditions~\eqref{eq:hinf-conditions-boundedvar}
and~\eqref{eq:real_impulse_conditions} can be verified in practice,
Condition~\eqref{eq:hinf-conditions-driscoll} generally cannot. We are
therefore motivated to find special cases of z-domain Gaussian processes for
which~\eqref{eq:hinf-conditions-driscoll} can be replaced by a more tractable
condition.
The broadest such case that we have found is where, in addition to satisfying
Conditions~\eqref{eq:hinf-conditions-boundedvar}
and~\eqref{eq:real_impulse_conditions}, the Hermitian covariance function is
stationary when restricted to the unit circle.
\begin{definition}
    A z-domain Gaussian process is \emph{Hermitian stationary} when its
    Hermitian covariance function satisfies
    $k(e^{j\theta}, e^{j\phi})=k(e^{j(\theta-\phi)},1)$ for all $\theta,\phi\in[-\pi,\pi)$. 
\end{definition}
Using a stationary process as a prior is common practice in
machine learning and control-theoretic applications of Gaussian process models.
Stationary processes are useful for constructing regression priors that do not
introduce unintended biases in their belief about the frequency response: since
$f(e^{j\theta})$ has the same
Hermitian variance across the entire unit circle, a sample path from a Hermitian
stationary $H_\infty$ process is just as likely to exhibit low-pass behavior as
it is high-pass or band-pass.%
\footnote{To be truly ``noninformative'' in the sense of introducing unwanted
biases, the complementary covariance should be stationary.
However, this is not possible while
satisfying~\eqref{eq:real_impulse_conditions}.}
We can obtain a ``partially informative'' prior by adding an $H_\infty$ process encoding
strong beliefs in one frequency range (such as the presence of a resonance peak)
to an $H_\infty$ process encoding weaker beliefs across all frequencies. The
sum, also an $H_\infty$ process, encodes a combination of these beliefs.

Under the additional condition of Hermitian stationarity, it turns out that the
$H_\infty$ process is characterized by a sequence of nonnegative constants.

\begin{theorem}
    \label{prop:stationary-process}
    Let $f$ be a Hermitian stationary, conjugate-symmetric z-domain Gaussian
    process with continuous Hermitian covariance $k$ and complementary
    covariance $\tilde{k}$.
    Then $f$ is an $H_\infty$ process if and only if $k$ and $\tilde{k}$ have
    the form
    \begin{equation}
        \label{eq:covariance_expansion}
        k(z,w) = \sum_{n=0}^\infty a_n^2 (zw^*)^{-n},
        \qquad
        \tilde{k}(z,w) = \sum_{n=0}^\infty a_n^2 (zw)^{-n},
    \end{equation}
    where $\{a_n\}_{n=0}^\infty$ is a nonnegative real $\ell^1$ sequence.
    Furthermore, $f$ may be expanded as \begin{equation}
        \label{eq:process_expansion}
        f(z) = \sum_{n=0}^\infty a_n w_n z^{-n},
    \end{equation}
    where $w_n\simiid\Norm{0,1}$.
\end{theorem}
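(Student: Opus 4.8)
The plan is to prove both directions by passing through the Fourier coefficients of the sample paths, which simultaneously yields the expansion~\eqref{eq:process_expansion}. \textbf{Sufficiency.} Suppose $k,\tilde{k}$ have the stated form with $\{a_n\}$ a nonnegative $\ell^1$ sequence. I would first exhibit the explicit candidate $\hat{f}(z)=\sum_{n=0}^\infty a_n w_n z^{-n}$ with $w_n\simiid\Norm{0,1}$. Since $\Ex{\sum_n a_n|w_n|}=\sqrt{2/\pi}\sum_n a_n<\infty$, the random variable $M=\sum_n a_n|w_n|$ is a.s.\ finite, so the series for $\hat{f}$ converges absolutely and uniformly on $\bar{E}$; hence $\hat{f}$ is a.s.\ continuous on $\bar{E}$, analytic on $E$, and $\normi{\hat{f}}\le M<\infty$, i.e.\ $\hat{f}\in H_\infty$ a.s. Conjugate symmetry is immediate because $a_n w_n\in\R$. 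A short computation gives $\Ex{\hat{f}(z)\hat{f}^*(w)}=\sum_n a_n^2(zw^*)^{-n}$ and $\Ex{\hat{f}(z)\hat{f}(w)}=\sum_n a_n^2(zw)^{-n}$, so $\hat{f}$ realizes exactly $k$ and $\tilde{k}$; since membership in $H_\infty$ is determined by the second-order statistics (the zero--one laws noted after Theorem~\ref{prop:hinf-conditions}), every Gaussian process with these covariances is an $H_\infty$ process, and, by matching Hardy-space coefficients, is given by~\eqref{eq:process_expansion}.

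\textbf{Necessity of the form.} Now suppose $f$ is a Hermitian stationary, conjugate-symmetric $H_\infty$ process. Each sample path lies in $H_2$, so $f(z)=\sum_{n\ge 0}f_n z^{-n}$ with random coefficients $f_n=\tfrac{1}{2\pi}\int_{-\pi}^{\pi} f(e^{j\theta})e^{jn\theta}\,d\theta$; these are jointly Gaussian as mean-square limits of linear combinations of process values (mean-square continuity on the circle follows from continuity of $k$). Expanding, $k(e^{j\theta},e^{j\phi})=\sum_{n,m\ge 0}\Ex{f_n f_m^*}\,e^{-j(n\theta-m\phi)}$, where the interchange of expectation and summation is justified by $\Ex{\sum_n|f_n|^2}=\tfrac{1}{2\pi}\int_{-\pi}^{\pi}k(e^{j\theta},e^{j\theta})\,d\theta<\infty$, using continuity of $k$ on the compact circle. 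Hermitian stationarity forces this double series to depend only on $\theta-\phi$, which is possible only if $\Ex{f_n f_m^*}=0$ for $n\neq m$; writing $a_n^2=\Ex{|f_n|^2}\ge 0$ gives $k(e^{j\theta},e^{j\phi})=\sum_n a_n^2 e^{-jn(\theta-\phi)}$ and $\sum_n a_n^2<\infty$. Equating the Hardy-space coefficients of $f(z^*)$ and $f^*(z)$ (on a countable dense subset of $\bar{E}$, then by continuity everywhere) shows $f_n=f_n^*$ a.s., so $\Ex{f_n f_m}=\Ex{f_n f_m^*}=a_n^2\delta_{nm}$ and $\tilde{k}$ has the matching form. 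Extending off the circle by the same coefficient computation (absolute convergence on $E$ via the Cauchy--Schwarz bound already used in Proposition~\ref{thm:h2_rkhs}, then continuity to the boundary) yields~\eqref{eq:covariance_expansion}; setting $w_n=f_n/a_n$ (with an independent dummy when $a_n=0$) gives $w_n\simiid\Norm{0,1}$ — real, jointly Gaussian, uncorrelated hence independent, unit variance — and $f(z)=\sum_n a_n w_n z^{-n}$.

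\textbf{The main obstacle.} The genuinely delicate point is upgrading the summability extracted above from $\sum_n a_n^2<\infty$ to $\{a_n\}\in\ell^1$: plain $\ell^2$ summability only certifies $f\in H_2$, whereas $f\in H_\infty$ amounts to a.s.\ boundedness of the boundary trace $\sum_n a_n w_n e^{-jn\theta}$, which is governed by the finer Gaussian-boundedness theory (the Dudley/Marcus--Pisier entropy criterion for the stationary canonical metric $\rho(\tau)^2=2\sum_n a_n^2\,(1-\cos n\tau)$) rather than by $\ell^2$ alone. I would therefore argue necessity of the sharper decay by invoking this boundedness machinery and reading off the required summability bound on $\{a_n\}$ from the entropy condition; all the coefficient bookkeeping in the previous paragraph is routine by comparison, so this is the step that does the real work.
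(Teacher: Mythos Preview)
Your sufficiency argument and the derivation of~\eqref{eq:covariance_expansion} from Hermitian stationarity are essentially the paper's proof: the paper uses Bochner's theorem and the martingale convergence theorem where you argue directly and use Fubini, but the content is the same.

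The genuine gap is in the necessity of $\{a_n\}\in\ell^1$, which you correctly flag as the hard step---but your proposed route will not close it. The Dudley/Marcus--Pisier machinery characterizes almost-sure boundedness of $\sum_n a_n w_n e^{-jn\theta}$, and this does \emph{not} force $\{a_n\}\in\ell^1$. Take $a_n=1/n$: the canonical metric is H\"older (tail variance $\sum_{n>N}a_n^2\sim 1/N$), so Dudley's entropy integral converges and the random series is a.s.\ continuous on the circle; the process therefore lies in $H_\infty$ a.s., yet $\sum_n a_n=\infty$. So you cannot ``read off'' $\ell^1$ summability from the entropy condition---boundedness yields a strictly weaker decay requirement on $\{a_n\}$. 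For comparison, the paper does not use entropy here at all: it asserts that $f\in H_\infty$ a.s.\ implies $\sum_n a_n|w_n|<\infty$ a.s., and then applies the Kolmogorov three-series theorem to conclude $\sqrt{2/\pi}\,\sum_n a_n=\Ex{\sum_n a_n|w_n|}<\infty$. The three-series step is clean, but the preceding implication ``bounded transfer function $\Rightarrow$ $\ell^1$ impulse response'' is not a general Hardy-space fact and is left unjustified in the paper; the same example $a_n=1/n$ shows it fails in this random setting as well. The obstacle you identified is real, and neither your entropy route nor the paper's argument actually closes it.
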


\begin{shortonly}
    \begin{proofsketch}
        That~\eqref{eq:process_expansion} leads
        to~\eqref{eq:covariance_expansion} follows from direct calculation and
        the independence of the $w_n$. Under the summability condition on the
        $a_n$, the impulse response of $f$ is absolutely summable with
        probability one, implying BIBO stability.
        Thus~\eqref{eq:process_expansion} is BIBO stable (and hence in
        $H_\infty)$ with probability one.

        To see that a Hermitian stationary, conjugate-symmetric $H_\infty$
        Gaussian process must
        have~\eqref{eq:covariance_expansion},~\eqref{eq:process_expansion} and
        that the summability condition holds,
        start with an expansion of $f$ into the basis $\{z^{-n}\}_{n=0}^\infty$ of $H_\infty$.
        Hermitian stationarity implies that the coefficients are uncorrelated,
        turning the basis function expansion into~\eqref{eq:process_expansion},
        from which~\eqref{eq:covariance_expansion} follows. Since the process is
        $H_\infty$, its impulse response must be absolutely summable with
        probability one, which is only true if the summability condition on the
        $a_n$ holds.
    \end{proofsketch}
\end{shortonly}

\begin{extendedonly}

\begin{proof}
    First, we show that $f$ having the form~\eqref{eq:process_expansion} with
    positive
    $\{a_n\}_{n=0}^\infty\in\ell^1$ implies
    that $f$ is a Hermitian stationary $H_\infty$ process
    satisfying~\eqref{eq:real_impulse_conditions} with the given covariances.
    Suppose that $f(z) = \sum_{n=0}^\infty a_n w_n z^{-n}$.
    We can readily see that
    \begin{equation}
        \begin{aligned}
            k(z,w) 
            &= \Ex{f(z)f^*(w)}
            = \Ex{
                \left(
                    \sum_{n=0}^\infty a_n w_n z^{-n}
                \right)
                \left(
                    \sum_{m=0}^\infty a_m w_m (w)^{-m}
                \right)^*
            }\\
            &= \Ex{
                \sum_{n=0,m=0}^\infty
                a_n a_m w_n w_m z^{-n} (w^*)^{-n}
            }
            = \sum_{n=0}^\infty a_n^2 (zw^*)^{-n},
        \end{aligned}
    \end{equation}
    where the cross terms vanish by the independence of the $w_n$. 
    Note also that
    $k(z,z)=k(z^*,z^*)$, which is the first part of
    condition~\eqref{eq:real_impulse_conditions}.
    A similar calculation yields
    \begin{equation}
        \tilde{k}(z,w) = \Ex{f(z)f(w)} 
        = \sum_{n=0}^\infty a_n^2 (zw)^{-n}=k(z,w^*),
    \end{equation}
    showing that both parts of condition~\eqref{eq:real_impulse_conditions} are
    satisfied,
    and that $f$ has real impulse response $ h_f(n) = w_n a_n$.
    Recall that a SISO system is BIBO stable if its impulse response
    is absolutely summable. 
    To that end, consider the sequence
    \begin{equation}
        M_T=
        \sum_{n=0}^T |h_f(n)|
        =\sum_{n=0}^T a_n |w_n|
    \end{equation}
    of partial sums:
    if $\lim_{T\to\infty}M_T$ converges to a random variable that is finite with probability one, then the impulse
    response is absolutely summable with probability one. Since the $w_n$ are independent and 
    $0 < a_n |w_n| < \infty $ for all $n$, it follows that $M_T$ is a submartingale and that
    $\Ex{M_T}$ increases monotonically. Using the summability condition on the
    $a_n$ and the fact that $\Ex{|w_n|}=\sqrt{2/\pi}$ (as $|w_n|$ follows a
    half-normal distribution), we have
    \begin{equation}
        \Ex{\sum_{n=0}^\infty |h_f(n)|} = \sum_{n=0}^\infty a_n |w_n|
        =\sqrt{\frac{2}{\pi}}\sum_{n=0}^\infty a_n<\infty,
    \end{equation}
    which means $\sup_T \Ex{M_T}<\infty$ by monotonicity. Since $M_T$ is a
    submartingale and $\sup_T \Ex{M_T}$ is finite, it follows by the
    Martingale convergence theorem~\cite[Theorem 4.2.11]{durrett2019probability}
    that the limit of $M_T$ converges to a
    random variable that is finite with probability one.
    This shows that $h_f$ is absolutely summable with probability one, implying
    BIBO stability and that $f\in H_\infty$ with probability one.

    Next, we show that a Hermitian stationary, conjugate-symmetric $H_\infty$
    Gaussian process $f$ must have Hermitian and complementary covariances
    of the form~\eqref{eq:covariance_expansion}, and that this in turn implies
    that $f$ has the form~\eqref{eq:process_expansion}. Since $f\in H_\infty$,
    we can use the fact that $z^{-n}, n \ge 0$ is a basis for $H_\infty$ to
    expand $f$ as
    \begin{equation}
        f(z) = \sum_{n=0}^\infty h_n z^{-n},
    \end{equation}
    where the coefficients $h_n = \langle f,z^{-n}\rangle_2$ are an infinite
    sequence of random variables.
    Since $f$ is Gaussian and conjugate symmetric,
    the $h_n$ are real Gaussian random
    variables that may be correlated. From this form, we can express the
    Hermitian and complementary covariance as
    \begin{equation}
        \begin{aligned}
            k(z,w) &= \sum_{n=0}^\infty \sum_{m=0}^\infty 
                \Ex{h_n h_m} z^{-n}(w^*)^{-m}\\
            \tilde{k}(z,w) &= \sum_{n=0}^\infty \sum_{m=0}^\infty 
                \Ex{h_n h_m} z^{-n}(w)^{-m},
        \end{aligned}
    \end{equation}
    which shows that $\tilde{k}(z,w)=k(z,w^*)$ for $z,w\in E^2$. Restricting the
    covariance functions to the unit circle, we have
    \begin{equation}
        \begin{aligned}
            \label{eq:covars1}
            k(e^{j\theta},e^{j\phi}) &= \sum_{n=0}^\infty \sum_{m=0}^\infty 
                \Ex{h_n h_m} e^{-j(n\theta-m\phi)}\\
            \tilde{k}(e^{j\theta},e^{j\phi}) &= \sum_{n=0}^\infty \sum_{m=0}^\infty 
                \Ex{h_n h_m} e^{-j(n\theta+m\phi)}.\\
        \end{aligned}
    \end{equation}
    By the assumption of Hermitian stationarity, we know that
    $k(e^{j(\theta-\phi)}, 1)$ is a positive definite function whose domain is
    the unit circle. We can therefore apply Bochner's 
    theorem~\cite[section 1.4.3]{rudin1962fourier} to obtain a second expansion
    \begin{equation}
            \label{eq:covars2}
        k(e^{j(\theta-\phi)}, 1)=
        k(e^{j\theta}, e^{j\phi})=
        \sum_{n\in\Z} a_n^2 e^{-jn(\theta - \phi)},
    \end{equation}
    where $a_n$ are real and nonnegative. In order for the
    expansion of $k$ in~\eqref{eq:covars1} and the expansion
    in~\eqref{eq:covars2} to be equal, the positive-power terms
    in~\eqref{eq:covars2} must vanish, and the cross-terms in~\eqref{eq:covars1}
    must vanish.  

    This means that $\Ex{h_n h_m}=0$ for $m\ne n$,
    from which it follows that the covariances have the form
    \begin{equation}
        \begin{aligned}
            k(z,w) &= 
            \sum_{n=0}^\infty
            \Ex{h_n^2} (zw^*)^{-n}
            =
            \sum_{n=0}^\infty
            a_n^2 (zw^*)^{-n}
            \\
            \tilde{k}(z,w) &= 
            \sum_{n=0}^\infty
            \Ex{h_n^2} (zw)^{-n}
            =
            \sum_{n=0}^\infty
            a_n^2 (zw)^{-n}
        \end{aligned}
    \end{equation}
    where we identify $a_n^2=\Ex{h_n^2}$,
    and that the $h_n$ are independent. Returning to the expanded form of the
    process and expressing $\Ex{h_n^2}=a_n^2$, we have
    \begin{equation}
        f(z) = \sum_{n=0}^\infty w_n a_n z^{-n}
    \end{equation}
    where $w_n\simiid\Norm{0,1}$.

    Evidently, the impulse response $h_f$ has the same form as
    before, so the expected absolute sum of the impulse response is
    $\Ex{\sum_{n=0}^\infty |h_f(n)|}=\sqrt{\frac{2}{\pi}}\sum_{n=0}^\infty a_n$.
    Since $f\in H_\infty$ by assumption, it follows that 
    $\sum_{n=0}^\infty |h_f(n)|$ almost surely converges, and therefore that
    $\Ex{\sum_{n=0}^\infty |h_f(n)|} <\infty$ by the Kolmogorov three-series
    theorem~(\cite[Theorem 2.5.8]{durrett2019probability}, condition (ii)), 
    showing that $\{a_n\}_{n=0}^\infty\in\ell^1$.

\end{proof}

\end{extendedonly}

Theorem~\ref{prop:stationary-process} provides a useful tool for
constructing conjugate-symmetric $H_\infty$ Gaussian processes: all we need to
do is select a summable sequence of nonnegative numbers. 
\begin{shortonly}
For example, we use
Theorem~\ref{prop:stationary-process} to construct the following regression prior for the next section.
\end{shortonly}

\begin{example}[Geometric $H_\infty$ process]
    \label{ex:geometric}
    Take $a_n^2 = \alpha^n$ with $\alpha\in(0,1)$; this yields
    a conjugate-symmetric $H_\infty$ Gaussian process with Hermitian covariance
    $k_\alpha(z,w)=\sum_{n=0}^\infty \alpha^n (zw^*)^{-n} = \frac{zw^*}{zw^*-\alpha}$
    and complementary covariance
    $\tilde{k}_\alpha(z,w) = \frac{zw}{zw-\alpha}$.
\end{example}

\begin{extendedonly}
\begin{example}[Exponential $H_\infty$ process]
    \label{ex:exponential}
    Take $a_n^2 = \frac{1}{n!}$; 
    this yields a conjugate-symmetric
    $H_\infty$ Gaussian process with Hermitian covariance
    $k(z,w) = \sum_{n=0}^\infty \frac{(zw^*)^{-n}}{n!} = e^{-zw^*}$
    and complementary covariance
    $\tilde{k}(z,w) = e^{-zw}$.
\end{example}
\end{extendedonly}

\section{Gaussian Process Regression in the Frequency Domain}
\label{sec:regression}

Let $H_g\in H^\infty$ denote the system whose transfer function $g\in H_\infty$
we wish to identify. While not stochastic, $g$ is unknown, and we represent both
our uncertainty and our prior beliefs in a Bayesian fashion with an $H_\infty$ Gaussian process
$f$ with
Hermitian and complementary covariances $k$ and $\tilde{k}$. To model our prior
beliefs, the distribution of $f$ should give greater probability to functions we
believe are likely to be similar to $g$, and should assign probability zero to
functions we know that $g$ cannot be. As an example of the latter, the fact that
$P(f\in H_\infty)=1$ encodes our belief that $g\in H_\infty$, which demonstrates
the importance of $H_\infty$ Gaussian processes for prior model design. 

We suppose that our data consists of $n$ noisy frequency-domain point estimates
$y_i = g(z_i) + e_i$, where $e_i\simiid\Norm{0,\sigma_e^2}$, $z_i\in\bar{E}$.
If our primary form of data is a time-domain trace of input and output values,
we first convert this data into an \emph{empirical transfer function estimate}
(ETFE). There are several well-established methods to construct ETFEs from time
traces, such as Blackman-Tukey spectral analysis, windowed filter banks, or
simply dividing the DFT of the output trace by the DFT of the input trace. In
our numerical examples, we will use windowed filter banks.

Our approach is essentially the same procedure as standard Gaussian process
regression as described in~\cite{GPML} extended to the complex case.
We take the mean of the prior model to be zero without loss of generality. 
To estimate the transfer function at a new point $z$, we note that $g(z)$ is
related to $(y_1,\dotsc,y_n)$ under the prior model as
\begin{equation}
    \begin{bmatrix}
        g(z)
        \\ 
        y 
    \end{bmatrix}
    \sim
    \CNorm{0,
        \begin{bmatrix}
            K_{xx} & K_{xy} \\
            K_{xy}^H & K_{yy} \\
        \end{bmatrix},
        \begin{bmatrix}
            \tilde{K}_{xx} & \tilde{K}_{xy} \\
            \tilde{K}_{xy}^H & \tilde{K}_{yy} \\
        \end{bmatrix}
    };
\end{equation}
where $y\in\C^n$, $K_{yy}\in\C^{n\times n}$,
$K_{xy}\in\C^{n\times 1}$, and
$K_{xx}\in\C$ are defined componentwise as
\begin{equation}
    (y)_i=y_i,
    \quad
    \left(K_{yy}\right)_{ij} = k(z_i,z_j) + \sigma_e^2\delta_{ij},
    \quad
    \left(K_{xy}\right)_{ij} = k(z,z_i),
    \quad
    K_{xx} = k(z,z) + \sigma_e^2,
\end{equation}
and the components of the complementary covariance matrix are defined
analogously. 
\begin{shortonly}
The minimum-error linear estimator
$\hat{g}(z)$ for $g(z)$ given the data and its predictive Hermitian
variance $\sigma^2_g$ are~(\cite[\S 5.3]{schreier2010statistical})
\begin{equation}
    \label{eq:strictly_linear_prediction}
    \hat{g}(z) = K_{xy}^H K_{yy}^{-1} y,
    \qquad
    \sigma^2_g(z) = k(z,z) - K_{xy}^H K_{yy}^{-1}k_{xy}.
\end{equation}
These expressions are identical to the posterior mean and variance of a real
Gaussian process regression model (cf. Equation (2.19) in~\cite{GPML}) except
that $K_{xx}$, $K_{xy}$, and $K_{yy}$ are complex-valued.

For general complex-valued Gaussian regression, the \emph{widely linear}
estimator, which incorporates $y^*$ and the complementary covariance, is an
improvement over the strictly linear estimator. The 
degree of improvement is measured by the matrix 
$P=K_{yy} - \tilde{K}_{yy}(K_{yy}^*)^{-1}\tilde{K}^*_{yy}$, 
which is the error variance of
linearly estimating $y^*$ from $y$ under the prior model.
In particular, when $P=0$ the strictly and widely linear estimators
coincide~\cite[\S 5.4.1]{schreier2010statistical}. In our experiments, we find that the strictly linear estimator
performs well for conjugate-symmetric $H_\infty$ GP priors, and that $P$ is
nearly singular and very small in norm compared to $K_{yy}$ and
$\tilde{K}_{yy}$, implying that its performance is close to the widely linear
estimator.%
\footnote{The widely linear regression equations, which we show in~(EXTENDED
PAPER), require inverting $P$. In this case, the
widely linear estimator is numerically unstable compared to the strictly linear
estimator.}
For these reasons, we use the strictly linear estimator in the
regression examples below.
We believe the strictly linear estimator works well for conjugate-symmetric
$H_\infty$ process because of the prior assumption of causality and conjugate
symmetry. We discuss this in more detail in~(EXTENDED PAPER).

\end{shortonly}

\begin{extendedonly}

By conditioning $f(z)$ on the data $y$ according to the prior
model, we obtain the posterior distribution of $f(z)$. According to the
conditioning law for multivariate complex Gaussian random 
variables~\cite[\S 2.3.2]{schreier2010statistical}, this is
$f(z)|y\sim\CNorm{\mu, \sigma_p^2, \tilde{\sigma}_p^2}$, where
\begin{equation}
    \begin{aligned}
        \label{eq:widely_linear_prediction}
        \mu_p
        &=
        (K_{xy} - \tilde{K}_{xy}(K^*_{yy})^{-1}\tilde{K}_{yy}^*)P^{-1}y
        +
        (\tilde{K}_{xy}-K_{xy}K_{yy}^{-1}\tilde{K}_{yy})(P^*)^{-1}y^*\\
        \sigma_p^2
        &= k_{zz}
        - K_{xy}P^{-1}K_{xy}^H 
        + \tilde{K}_{xy}K_{yy}^{-1}\tilde{K}_{yy}(P^*)^{-1}K_{xy}^H\\
        &\quad
        - \tilde{K}_{xy}(P^*)^{-1}\tilde{K}_{xy}^H
        + K_{xy}K_{yy}^{-1}\tilde{K}_{yy}(P^*)^{-1}\tilde{K}_{xy}^H\\
        \tilde{\sigma}_p^2
        &= \tilde{k}_{zz}
        - K_{xy}P^{-1}(\tilde{K}_{xy}^*)^H
        + \tilde{K}_{xy}K_{yy}^{-1}\tilde{K}_{yy}(P^*)^{-1}(\tilde{K}_{xy}^*)^H\\
        &\quad
        - \tilde{K}_{xy}(P^*)^{-1}(K_{xy}^*)^H
        + K_{xy}K_{yy}^{-1}\tilde{K}_{yy}(P^*)^{-1}(K_{xy}^*)^H
    \end{aligned}
\end{equation}
and where $P$ denotes the Schur product 
$P=K_{yy} - \tilde{K}_{yy}(K_{yy}^*)^{-1}\tilde{K}^*_{yy}$.
The predictive mean $\mu_p$ is the \emph{minimum mean-square error widely linear
estimator} of $g(z)$ given $y$, where ``widely linear'' means
that $\mu_p$ is a linear combination of both $y$ and $y^*$. 
A \emph{strictly linear} estimator, on the other hand, uses only $y$. Under the
same circumstances as above, the minimum least-square strictly linear estimator
for $g(z)$ given $y$ and its error variance are respectively
\begin{equation}
    \label{eq:strictly_linear_prediction}
    \hat{g}(z) = K_{xy}^H K_{yy}^{-1} y,
    \qquad
    \sigma^2_g(z) = K_{zz} - K_{xy}^H K_{yy}^{-1}k_{xy},
\end{equation}
which are identical to the posterior mean and variance of a real Gaussian
process regression model (cf. Equation (2.19) in~\cite{GPML}) except that $K_{xx}$, $K_{xy}$, and
$K_{yy}$ are complex-valued.

The widely linear estimator can only be an improvement on the linear estimator,
since an estimate made using $y$ can certainly be made using
$(y,y^*)$. The improvement is measured by the Schur
complement $P$ defined above, which is the error covariance of statistically
estimating $y^*$ from $y$, or equivalently estimating the real part given the
imaginary part.
In particular, when $P=0$, the strictly linear and widely linear
estimators coincide, and the expressions
in~\eqref{eq:widely_linear_prediction} become ill-defined.
One case where this holds is when the covariances are \emph{maximally improper},
in which case the imaginary part can be estimated from the real with zero error.

In our experiments with real-impulse $H_\infty$ processes, we have found that
$P$ tends to be close to singular, and small in induced 2-norm and Frobenius norm
relative to $K_{yy}$ and $\tilde{K}_{yy}$. This makes the mean and variance
computations in~\eqref{eq:widely_linear_prediction} numerically unstable while
also implying that the strictly linear estimator will perform similarly to the
widely linear estimator. 
We believe this is due to the symmetry
condition imposed on $k$ and $\tilde{k}$ by having real impulse response. 
This condition implies that the imaginary part can be computed exactly from the
real part by the discrete Hilbert transform~\cite[\S 2.26]{rabiner1975theory}.
The covariance matrices $K_{yy}$ and $\tilde{K}_{yy}$ will not themselves be
maximally improper, since the Hilbert transform requires knowledge over the
entire unit circle; however, our experiments suggest that they are close to
maximally improper, and we conjecture that they become maximally improper in the
limit of infinite data.
This suggests that the strictly linear estimator will perform well for
conjugate-symmetric $H_\infty$ priors. For this reason, as well as the numerical
instability of the widely linear estimator when $P$ is close to singular, we use
the strictly linear estimator in our numerical experiments.

\end{extendedonly}

For $z\in D$ and $\eta > 0$, define the \emph{confidence ellipsoid} 
$\mathcal{E}_\eta(z) = \{w\in \C: |w-\hat{g}(z)|^2 \le \eta^2\sigma^{2}_g(z)\}$. By
Markov's inequality, we know that $f(z)\in\mathcal{E}_\eta(z)$ with probability
$\ge 1 - 1/\eta^2$. 
This implies bounds on the real and imaginary parts
by projecting the confidence ellipsoid onto the real and imaginary axes:
from these we can construct probabilistic bounds on the magnitude
and phase of $f(z)$ via interval arithmetic, which we will see in the numerical examples.

Let $\theta\in\Theta$ denote the hyperparameters of a covariance
function $k_\theta$, so that $K_{yy}$ becomes a function of $\theta$:
then the log marginal likelihood of the data under the posterior for the strictly
linear estimator is
$
    L(\theta) = -\tfrac{1}{2}\left(
        y^HK_{yy}(\theta)^{-1}y + \log\det K_{yy}(\theta) + n\log 2\pi
    \right).
$
Keeping the data $y$ and input locations $z_i$ fixed, $L(\theta)$ measures the
probability of observing data $y$ when the prior covariance function is
$k_\theta$. By maximizing $L$ with respect to $\theta$, we find the covariance
among $k_\theta$, $\theta\in\Theta$ that best explains the observations.%
\footnote{Although it seems contradictory to choose prior parameters based on
posterior data, it can be justified as an empirical-Bayes approximation to a
hierarchical model with $\theta$ as hyperparameter. }

\begin{extendedonly}
To summarize, the regression process is as follows:
\begin{enumerate}
    \item Select a family of $H_\infty$ Gaussian process models indexed by a
        hyperparameter set $\Theta$;
    \item observe point estimate data of $g(z)$, typically by an empirical transfer function estimate;
    \item Select $\theta\in\Theta$ that maximizes the log likelihood $L(\theta)$;
    \item Use the strictly linear
        estimator~\eqref{eq:strictly_linear_prediction}
        to obtain an estimate $\hat{g}$ and predictive
        variance $\sigma_g$.
\end{enumerate}
We now demonstrate the process by identifying two second-order systems.
\end{extendedonly}

\subsection{Examples: Identifying Second-order Systems}%
\label{sub:examples}

\begin{figure}[htpb]
    \centering
    \includegraphics[width=\linewidth]{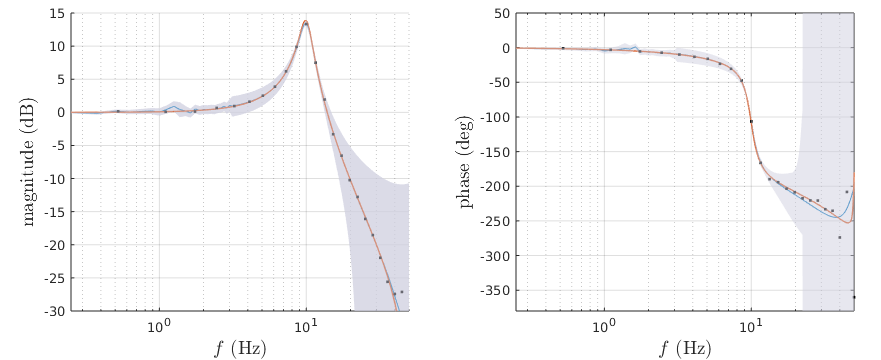}
    \caption{Bode plot of the second-order resonant system (orange), and its estimate
    (blue) using $H_\infty$ Gaussian process regression from an empirical
transfer function estimate (black points) with $\eta=3$ confidence ellipsoid bounds (grey).}%
    \label{fig:resonance_regression}
\end{figure}

We apply the strictly-linear $H_\infty$ Gaussian process regression method
described above to the problem of identifying two second-order systems.
The first test system is a second-order system that exhibits a resonance peak.
The system is specified in continuous time, with canonical second-order transfer
function
$    g(s) = \frac{\omega_0^2}{s^2+2\xi\omega_0s + \omega_0^2}$,
where $\omega_0=20\pi$ rad/s, and $\xi=0.1$, and converted to the discrete-time
transfer function $g(z)$ using a zero-order hold discretization with a sampling
frequency of $f_s=100$ Hz. 
We suppose that we know \emph{a priori} that there is a resonance peak, but not
about its location or half-width, and we have no other strong information about
the frequency response. For this prior belief, an appropriate prior model is a
weighted mixture of a cozine process and a Hermitian stationary process.
In particular, we use the family of $H_\infty$ processes with covariance
functions
$
    k(z,w) = \sigma^2_g k_g(z,w) + \sigma^2_c k_c(z,w),
    \tilde{k}(z,w) = \sigma^2_g \tilde{k}_g(z,w) + \sigma^2_c \tilde{k}_c(z,w),
$
where $k_g$ is the covariance of the geometric $H_\infty$ process defined in
Example~\ref{ex:geometric}, and $k_c$ is the covariance of the cozine process,
and likewise for the complementary covariance. $\sigma^2_g$ and $\sigma^2_c$ are
weights that determine the relative importance of the two
parts of the model. This family of covariances has five hyperparameters:
$k_g\in[0,\infty)$,
$\alpha\in(0,1)$,
$k_c\in[0,\infty)$,
$\omega_0\in[0,\pi]$, and
$a\in(0,1)$.

We suppose that an input trace $u(n)$ of Gaussian white noise with variance $\sigma^2_u=1/f_s$ is run through
$H_g$ yielding an output trace $y(n)$; our observations comprise these two
traces, corrupted by additive Gaussian white noise of variance
$\sigma^2=10^{-4}/f_s$. To obtain an empirical transfer function estimate,
we run both observation traces through a bank of 25 windowed 1000-tap DFT filters. 
The impulse responses of the filter bank are
$h_i(n) = e^{j\omega_i n} w(n)$ for $i=1,\dotsc,25$,
with Gaussian window $w(n)=\exp(-\tfrac{1}{2}(\sigma_w(n-500) / 1000)^2)$ for
$n=0,\dotsc,999$, and $w(n)=0$ otherwise, with window half-width
$\sigma_w=0.25$.
Let $u_i$, $y_i$ denote the outputs of filter $h_i$ with inputs $u$, $y$
respectively: $y_i(n)/u_i(n)$ gives a running estimate of $g(e^{j\omega_i})$,
whose value after 1000 time steps we take as our observation at
$z_i=e^{j\omega_i}$.
Figure~\ref{fig:resonance_regression} shows the regression from the strictly
linear estimator~\eqref{eq:strictly_linear_prediction} after tuning the
covariance hyperparameters via maximum likelihood, along with predictive error
bounds based on $\eta=3$ confidence ellipsoids.

The second is a second-order allpass filter. This system is specified in
discrete time with the transfer function
$    g(z) = \frac{|z_0|^2 - 2\Re{z_0} + 1}{1 - 2\Re{z_0} + |z_0|^2}$,
where $z_0=0.5e^{\pm j\pi/4}$ are the system's poles, with sampling
frequency $f_s= 100$ Hz.
For this system we assume
that we do not have \emph{a priori} information on the structure of the
frequency response, so we use a Hermitian stationary $H_\infty$ process as the
prior model. In particular, we take the family of geometric $H_\infty$ process,
indexed by hyperparameter $\alpha\in(0,1)$. To construct the
empirical transfer function estimate, we use the same data model and filter bank
as the previous example. 
Figure~\ref{fig:allpass_regression} shows the strictly linear regression after tuning the
covariance hyperparameters, again with predictive error bounds from
$\eta=3$ confidence ellipsoids.

\begin{figure}[htpb]
    \centering
    \includegraphics[width=\linewidth]{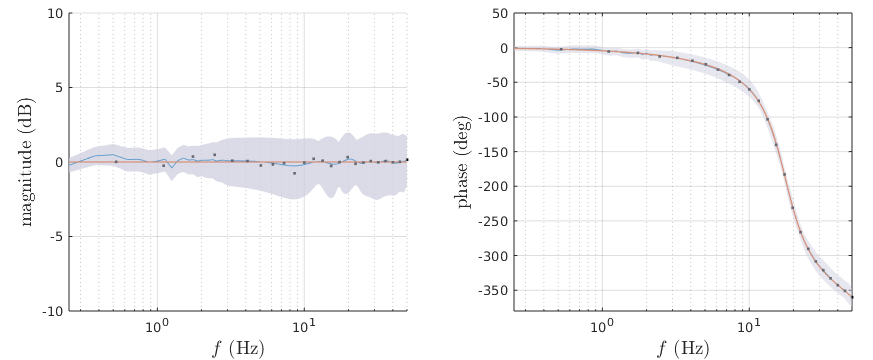}
    \caption{Bode plot of the second-order allpass system (orange), and its estimate
    (blue) using $H_\infty$ Gaussian process regression from an empirical
transfer function estimate (black points) with $\eta=3$ confidence ellipsoid bounds (grey).}%
\label{fig:allpass_regression}
\end{figure}

\section{Conclusion}%
\label{sec:conclusion}

The $H_\infty$ processes constructed using the results of this paper,
particularly Theorem~\ref{prop:stationary-process}, are effective priors for
Bayesian nonparametric identification of transfer functions. Furthermore, the
strictly linear estimator, which is suboptimal for general complex Gaussian
process priors, provides 
transfer function estimates that are close to optimal
for conjugate-symmetric $H_\infty$ priors. We have numerical evidence that
suggests that as the number of
frequency data points increases, the covariance becomes \emph{maximally
improper}, a case in which the strictly linear is indeed optimal. We will
investigate this conjecture in future work.

The applications presented in this paper use $H_\infty$
Gaussian process as statistically interpretable regression priors, but do not
consider questions of probabilistic robustness. We intend to follow this work
with a similar investigation into the robustness properties of $H_\infty$
models, such as probabilistic bounds on the $H_\infty$ norm, and integral
quadratic constraints that hold with high probability for an $H_\infty$ process
with given mean and covariance functions. 
\begin{extendedonly}
A simple bound on $\normi{f}$ can be
constructed using Borell's inequality on the real and imaginary parts of $f$,
but a bound tight enough for practical use will require more careful analysis.
\end{extendedonly}

\acks{%
This work was supported by the grants ONR N00014-18-1-2209, AFOSR 
FA9550-18-1-0253.
}

\bibliography{refs}

\begin{thebibliography}{25}
\providecommand{\natexlab}[1]{#1}
\providecommand{\url}[1]{\texttt{#1}}
\expandafter\ifx\csname urlstyle\endcsname\relax
  \providecommand{\doi}[1]{doi: #1}\else
  \providecommand{\doi}{doi: \begingroup \urlstyle{rm}\Url}\fi

\bibitem[Adler and Taylor(2007{\natexlab{a}})]{RFG}
Robert~J Adler and Jonathan~E Taylor.
\newblock \emph{Random fields and geometry}, volume~80.
\newblock Springer, 2007{\natexlab{a}}.

\bibitem[Adler and Taylor(2007{\natexlab{b}})]{RFGA}
Robert~J Adler and Jonathan~E Taylor.
\newblock \emph{Applications of random fields and geometry: applications and
  case studies}.
\newblock Unpublished manuscript, 2007{\natexlab{b}}.
\newblock URL \url{https://robert.net.technion.ac.il/publications/}.

\bibitem[Balas et~al.(2012)Balas, Seiler, and Packard]{balas2012analysis}
Gary Balas, Peter Seiler, and Andrew Packard.
\newblock Analysis of an {UAV} flight control system using probabilistic $\mu$.
\newblock In \emph{AIAA Guidance, Navigation, and Control Conference}, page
  4989, 2012.

\bibitem[Biannic et~al.(2021)Biannic, Roos, Bennani, Boquet, Preda, and
  Girouart]{biannic2021advanced}
Jean-Marc Biannic, Cl{\'e}ment Roos, Samir Bennani, Fabrice Boquet, Valentin
  Preda, and B{\'e}n{\'e}dicte Girouart.
\newblock Advanced probabilistic $\mu$-analysis techniques for {AOCS}
  validation.
\newblock \emph{European Journal of Control}, 62:\penalty0 120--129, 2021.

\bibitem[Calafiore and Dabbene(2007)]{calafiore2007probabilistic}
Giuseppe~C Calafiore and Fabrizio Dabbene.
\newblock Probabilistic robust control.
\newblock In \emph{2007 American Control Conference}, pages 147--158. IEEE,
  2007.

\bibitem[Chen et~al.(2012)Chen, Ohlsson, and Ljung]{chen2012estimation}
Tianshi Chen, Henrik Ohlsson, and Lennart Ljung.
\newblock On the estimation of transfer functions, regularizations and
  {G}aussian processes—revisited.
\newblock \emph{Automatica}, 48\penalty0 (8):\penalty0 1525--1535, 2012.

\bibitem[Driscoll(1973)]{driscoll1973reproducing}
Michael~F Driscoll.
\newblock The reproducing kernel {H}ilbert space structure of the sample paths
  of a {G}aussian process.
\newblock \emph{Zeitschrift f{\"u}r Wahrscheinlichkeitstheorie und verwandte
  Gebiete}, 26\penalty0 (4):\penalty0 309--316, 1973.

\bibitem[Durrett(2019)]{durrett2019probability}
Rick Durrett.
\newblock \emph{Probability: theory and examples}, volume~49.
\newblock Cambridge University Press, 2019.

\bibitem[Helmberg(1969)]{helmberg1969hilbert}
Gilbert Helmberg.
\newblock \emph{Introduction to spectral theory in {H}ilbert space}.
\newblock North-{H}olland, 1969.

\bibitem[Khatri and Parrilo(1998)]{khatri1998guaranteed}
Sven Khatri and Pablo~A Parrilo.
\newblock Guaranteed bounds for probabilistic $\mu$.
\newblock In \emph{Proceedings of the 37th IEEE Conference on Decision and
  Control (Cat. No. 98CH36171)}, volume~3, pages 3349--3354. IEEE, 1998.

\bibitem[Khosravi and Smith(2019)]{khosravi2019kernel}
Mohammad Khosravi and Roy~S Smith.
\newblock Kernel-based identification of positive systems.
\newblock In \emph{2019 IEEE 58th Conference on Decision and Control (CDC)},
  pages 1740--1745. IEEE, 2019.

\bibitem[Khosravi and Smith(2021)]{khosravi2021kernel}
Mohammad Khosravi and Roy~S Smith.
\newblock Kernel-based identification with frequency domain side-information.
\newblock \emph{arXiv preprint arXiv:2111.00410}, 2021.

\bibitem[Landau and Shepp(1970)]{landau1970supremum}
Henry~J Landau and Lawrence~A Shepp.
\newblock On the supremum of a {G}aussian process.
\newblock \emph{Sankhy{\=a}: The Indian Journal of Statistics, Series A}, pages
  369--378, 1970.

\bibitem[Lataire and Chen(2016)]{lataire2016transfer}
John Lataire and Tianshi Chen.
\newblock Transfer function and transient estimation by {G}aussian process
  regression in the frequency domain.
\newblock \emph{Automatica}, 72:\penalty0 217--229, 2016.

\bibitem[Paulsen and Raghupathi(2016)]{RKHS}
Vern~I Paulsen and Mrinal Raghupathi.
\newblock \emph{An introduction to the theory of reproducing kernel {H}ilbert
  spaces}, volume 152.
\newblock Cambridge University Press, 2016.

\bibitem[Pillonetto and De~Nicolao(2010)]{pillonetto2010new}
Gianluigi Pillonetto and Giuseppe De~Nicolao.
\newblock A new kernel-based approach for linear system identification.
\newblock \emph{Automatica}, 46\penalty0 (1):\penalty0 81--93, 2010.

\bibitem[Rabiner and Gold(1975)]{rabiner1975theory}
Lawrence~R Rabiner and Bernard Gold.
\newblock Theory and application of digital signal processing.
\newblock \emph{Englewood Cliffs: Prentice-Hall}, 1975.

\bibitem[Rasmussen and Williams(2006)]{GPML}
Carl~Edward Rasmussen and Christopher~K.I. Williams.
\newblock \emph{{G}aussian processes for machine learning}.
\newblock MIT Press, 2006.

\bibitem[Rudin(1962)]{rudin1962fourier}
Walter Rudin.
\newblock \emph{Fourier analysis on groups}.
\newblock Interscience, 1962.

\bibitem[Schreier and Scharf(2010)]{schreier2010statistical}
Peter~J Schreier and Louis~L Scharf.
\newblock \emph{Statistical signal processing of complex-valued data: the
  theory of improper and noncircular signals}.
\newblock Cambridge University Press, 2010.

\bibitem[Singh and Sznaier(2020)]{singh2020loewner}
Rajiv Singh and Mario Sznaier.
\newblock A {L}oewner matrix based convex optimization approach to finding low
  rank mixed time/frequency domain interpolants.
\newblock In \emph{2020 American Control Conference (ACC)}, pages 5169--5174.
  IEEE, 2020.

\bibitem[Somers et~al.(2022)Somers, Thai, Roos, Biannic, Bennani, Preda, and
  Sanfedino]{somers2022probabilistic}
Franca Somers, Sovanna Thai, Cl{\'e}ment Roos, Jean-Marc Biannic, Samir
  Bennani, Valentin Preda, and Francesco Sanfedino.
\newblock Probabilistic gain, phase and disk margins with application to {AOCS}
  validation.
\newblock \emph{IFAC-PapersOnLine}, 55\penalty0 (25):\penalty0 1--6, 2022.

\bibitem[Stoddard et~al.(2019)Stoddard, Birpoutsoukis, Schoukens, and
  Welsh]{stoddard2019gaussian}
Jeremy~G Stoddard, Georgios Birpoutsoukis, Johan Schoukens, and James~S Welsh.
\newblock Gaussian process regression for the estimation of generalized
  frequency response functions.
\newblock \emph{Automatica}, 106:\penalty0 161--167, 2019.

\bibitem[Takyar and Georgiou(2010)]{takyar2010analytic}
Mir~Shahrouz Takyar and Tryphon~T Georgiou.
\newblock Analytic interpolation with a degree constraint for matrix-valued
  functions.
\newblock \emph{IEEE transactions on automatic control}, 55\penalty0
  (5):\penalty0 1075--1088, 2010.

\bibitem[Young(1988)]{young1988hilbert}
Nicholas Young.
\newblock \emph{An introduction to {H}ilbert space}.
\newblock Cambridge University Press, 1988.

\end{thebibliography}

\end{document}